\newcommand{\diag}{\text{diag}}
\newcommand{\tr}[1]{\ensuremath{\text{tr}\left[#1\right]}}
\newcommand{\ptr}[2]{\ensuremath{\text{tr}_{#1}\left[#2\right]}}
\def\ket#1{|#1\rangle}
\def\bra#1{\langle#1|}
\newtheorem{defn}{Definition}[section]
\newtheorem{theo}{Theorem}[section]
\newenvironment{proof}[1][Proof]{\noindent\textbf{#1.} }{\ \rule{0.5em}{0.5em}}
\begin{document}	
\title{Two-qubit causal structures and the geometry of positive qubit-maps} 
\author{Jonas K\"ubler and Daniel Braun}
\affiliation{Institut f\"ur theoretische Physik, Universit\"at T\"{u}bingen,
72076 T\"ubingen, Germany 
}
\begin{abstract}
We study quantum causal inference in a
set-up proposed by Ried et al.~[Nat. Phys. 11, 414 (2015)] in
which a common-cause scenario can be mixed
with a cause-effect scenario, and for which it was found that quantum
mechanics can bring an advantage in distinguishing the
two scenarios: Whereas in classical statistics, interventions such as
randomized trials are needed, a 
 quantum observational scheme 
   can be enough
to detect the causal structure if 
the common cause results from a maximally entangled state.\\
We analyze this setup in terms of the geometry of unital positive but
not completely positive
qubit-maps, arising from the mixture of qubit-channels and steering
maps. We find the range of mixing parameters 
that can generate given correlations,  and prove a quantum advantage
in a more general setup, allowing arbitrary unital channels and initial
states with fully mixed reduced states. This is achieved
by establishing new bounds on signed singular values of sums of
matrices.  
Based on the geometry, we 
quantify and identify the origin of 
the quantum advantage depending on the observed correlations, 
and discuss how additional constraints can 
lead to a unique solution of the problem. 
\end{abstract}

\maketitle

\section{Introduction} %First draft
Imagine a scenario where two experimenters, Alice and Bob, sit in two
distinct laboratories. At one point Alice opens the door of her
laboratory, obtains a coin, checks whether it shows heads or tails and
puts it back out of the laboratory.
Some time later also Bob obtains a 
coin and also he checks whether it shows heads or tails. This
experiment is repeated many times (ideally: infinitely many times) and
after this they 
meet and analyze their joint outcomes. Assuming
their joint probability distribution entails correlations, there must
be some underlying causal mechanism which causally connects their
coins \cite{Reichenbach1971}. This could
be an unobserved
confounder (acting as a common-cause), and 
they actually measured  
two distinct coins 
influenced by the confounder. 
Or it could be that Alice's coin was propagated
by some mechanism to Bob's laboratory, and hence they
actually measured the same coin, with the consequence 
that 
manipulations of the coin by Alice 
can directly influence Bob's result (cause-effect scenario). The task
of  
Alice and Bob is 
to determine the underlying causal structure, i.e.~to distinguish
the two scenarios. This would be rather easy if Alice could prepare
her coin after the 
observation
by her choice and then check whether
this influences the joint probability (so-called ``interventionist
scheme''). In the 
present scenario, however, 
we assume that this is not allowed (so-called ``observational
scheme'').  All that 
Alice and Bob have are therefore the given correlations,  and from
those alone, in general they cannot solve
this task without additional assumptions. 
Ried et al.~\cite{Ried2015} showed that in a similar quantum scenario
involving qubits the above task {\em can} actually be accomplished in 
certain cases even in an observational scheme 
(see below for a
discussion of how the idea of an observational scheme can be
generalized to quantum mechanics). \\

In the present work we consider the
same setup as in \cite{Ried2015}, and allow  
arbitrary convex combinations of the two scenarios: The
common-cause scenario is realized with probability $p$, the
cause-effect scenario with probability $1-p$. Our main result 
are statements about the ranges of the parameter $p$ for which observed correlations 
can be explained with either one of the scenarios, or both. For this,
we cast the problem in the language of affine representations of
\textit{unital positive qubit maps} \cite{Bengtsson2006} in which all
the information is encoded in a  $3\times 3$ real matrix, as is standard
in quantum information theory for \textit{completely positive
  unital qubit maps} \cite{Nielsen2009}. \\  

The paper is structured as follows: In section \ref{sec:Causal} we
introduce causal models for classical random variables and for
quantum systems. Therein we define what 
we consider a
\textit{quantum observational scheme}. Section \ref{sec:positiveMaps}
introduces the mathematical framework of ellipsoidal
representations of qubit quantum-channels and qubit steering-maps. In
section \ref{sec:Results} we define 
our problem mathematically and prove the
main results, which we then comment in the last section
\ref{sec:discussion}.

\section{Causal inference: classical versus quantum} \label{sec:Causal}
\subsection{Classical causal inference}
At the heart of a \textit{classical causal model} is a set of random
variables $X_1, X_2, ..., X_N$. The observation of a specific value of
a variable, $X_i = x_i$, is associated with an \textit{event}.
Correlations between events hint at 
some kind of causal mechanism that links the events \cite{Reichenbach1971} 
. Such a mechanism
can be a deterministic law as for example $x_i = f(x_j)$ or can be a
probabilistic process described by conditional probabilities
$P(x_i|x_j)$, i.e.~the probability to find $X_i=x_i$  given 
$X_j=x_j$ 
was observed. The causal mechanism may not be 
merely a direct causal influence from one observed event on the other,
but may be
due to common causes that lead  with a certain
probability to both events --- or a mixture between both
scenarios. Hence, by merely analysing correlations 
$P(x_1,x_2,\ldots,x_n)$, i.e.~the joint probability distribution of
all events, 
one can, in general, without prior 
knowledge of the \textit{data generating 
  process},   not uniquely determine  the causal mechanism that leads
to the observed correlations (purely \textit{observational} scheme).  
To remedy this, an intervention is often necessary, where the value of a
variable $X_i$ whose causal influence one wants to investigate, is set
by an experimentalist to different values, trying to see whether this
changes the statistics of the remaining events (\textit{interventionist}
scheme). 
One strategy
for reducing the influence of other, unknown factors, is to randomize the samples.
This is  for example a typical approach in clinical studies, where one
group of randomly selected probands receives a treatment whose
efficiency one wants to investigate, and a randomly selected control
group receives a placebo. If the percentage of cured people in the
first group is significantly larger than in the second group, one can
believe in a positive causal effect of the treatment.
The probabilities obtained in this interventionist scheme are
so-called ``do-probabilities''  (or ``causal  
conditional probabilities'') \cite{Pearl2009a}:
$P\left(x_i|\text{do}(x_j)\right)$ is the probability to find
$X_i=x_i$ if an 
experimentalist intervened and set the value of $X_j$ to the value
$x_j$.  This is different from $P\left(x_i|x_j\right)$, as a
possible causal influence from some other unknown event on $X_j=x_j$
is cut, i.e.~one deliberately modifies the underlying causal structure
for better understanding a part of it.  If $X_j=x_j$ was the only
direct cause of $X_i=x_i$ then 
$P\left(x_i|x_j\right) = P\left(x_i|\text{do}(x_j)\right)$. If instead 
the event $X_i=x_i$ was a cause of $X_j=x_j$, then intervening on $X_j$
cannot change $X_i$: $P(x_i)=
P\left(x_i|\text{do}(x_j)\right)=P\left(x_i|\text{do}(\bar{x_j})\right)$,
where $\bar{x_j}$ is a value different from $x_j$.   If the
correlation between $X_i=x_i$ and 
$X_j=x_j$ is purely because of a common cause, then no intervenion on
$X_i$ or $X_j$ will change 
the probability to find a given value of the other:
$P(x_i) = P\left(x_i|\text{do}(x_j)\right)$ for all $x_j$, and $P(x_j)
= P\left(x_j|\text{do}(x_i)\right)$ for all $x_i$.  Observing these
do-probabilities one can hence draw conclusions about the causal
influences behind  the correlations observed in the occurence of
$X_i=x_i$ and $X_j=x_j$. \\

 In practice, direct 
causation in one direction is often excluded by time-ordering and need not
to be investigated. For example, when doubting that one can conclude
that smoking causes
lung cancer from the observed correlations between these two
events, it does not make sense to claim that having lung cancer causes
smoking, as usually smoking comes before developing lung
cancer. 
But even dividing a large number of people randomly into two
groups and forcing one of them to smoke and the other not to smoke in
order to find out if there is a common cause for both
would be ethically 
inacceptable.  The needed do-probabilities can
therefore not 
always be obtained by experiment. Interestingly, the causal-probability calculus
allows one in certain cases, depending notably on the graph structure,
to 
{\em calculate} do-probabilities from 
observed correlations without having to do the intervention. 
Inversely, apart from only
predicting the conditional probabilities for a random
variable say $X_i$ given the 
observation of $X_j=x_j$, denoted as $P(x_i|x_j)$, a causal model can
also predict the do-probabilities, i.e.~the distribution of $X_i$ if
one \textit{would} {intervene} 
on the variable $X_j$ and set its value to $x_j$. 
This is crucial for deriving informed recommendations for actions
targeted at modifying certain probabilities, e.g.~recommending not to
smoke in order to reduce the risk for cancer. \\ 

The structure of a causal model can be depicted by a
graph. Each random variable is represented by a vertex of the graph.
Causal connections are represented by directed arrows and imply that
signaling along the direction of the arrow is possible. In a classical
causal model it is assumed that events happen at specific points in
space and time, therefore bidirectional signaling is not possible as
it would imply signaling backward in time. Hence the graph cannot
contain cycles and is therefore a \textit{directed acyclic graph}
(DAG) \cite{Pearl2009a}, see FIG.\ref{Fig:4-DAG}. The set of
parents $PA_j$ of the random variable $X_j$ is defined as the set of
all variables that have an immediate arrow pointing towards $X_j$, and
$pa_j$ denotes a possible value of $PA_j$. The causal model is then
defined through its graph with random variables $X_i$ at its vertices 
and the weights $P(x_j|pa_j)$ of each edge, i.e.~the probabilities
that $X_j=x_j$ happens under the condition 
that $Pa_j=pa_j$ occurred. 
The model generates the entire correlation function
according to
\begin{equation}
  \label{eq:Pcm}
  P(x_1,\ldots,x_n)=\prod_{j=1}^nP(x_j|pa_j)\,,
\end{equation}
which is referred to as \textit{causal Markov condition} \cite{Pearl2009a}.
When
all $P(x_1,\ldots,x_n)$ are  
given, then all conditional probabilities follow, hence all
$P(x_j|pa_j)$ that appear in a given graph, but in general not all
correlations nor all $P(x_j|pa_j)$ are known (see below). 
The causal inference probleme consists in finding a graph structure
that allows one to satisfy eq.\eqref{eq:Pcm} for given data
$P(x_1,\ldots,x_n)$ and all 
known $P(x_j|pa_j)$, where the unknown $P(x_j|pa_j)$
can be considered fit-parameters in case of incomplete data.  
With access to the full joint
probability distribution, the causal inference only needs to
determine the graph.  % The graph defines which 
% conditional probabilities enter in the decomposition \eqref{eq:Pcm},
% but with access to the full probability distribution there is no need
% to fit the $P(x_j|pa_j)$ themselves. 
In practice, 
however, one often has only incomplete data: as long as a common cause
has not been determined yet, 
one will not have data involving correlations of the corresponding
variable.  For example, one
may have strong correlations between getting lung cancer
(random variable $X_2\in\{0,1\}$) and smoking (random variable
$X_1\in\{0,1\}$),  
but if there is a unknown common cause $X_0$ for both, one typically
has 
no information about $P(x_0,x_1,x_2)$: One will only start collecting
data about correlations 
between the presence of a certain gene, say,  and the habit of smoking or
developing lung cancer once one suspects that gene to be a 
cause for at least one of these. In this case $P(x_1|x_0)$ and
$P(x_2|x_0)$ are fit parameters to the model as well. The possibility
of extending a causal model through inclusion of unknown random
variables is one reason why in general there is no unique solution to
the causal inference problem based on correlations
alone. Interventions on $X_i$ make it possible, on the other hand, to
cut $X_i$ from its parents and hence eliminate unknown causes one by
one for all random variables. \\

Once a causal model is known, one can
calculate all distributions  
\begin{align}
P(x_1,...,x_n \vert i_1,..., i_n) &= \prod_{j=1}^n P(x_j \vert pa_j, i_j), \label{eq:causalMarkov}
\end{align}
%where the $i_j$  are the values of the intervention variable $I_j$
%for the event $X_j$ ($i_j = \text{idle}$ or $i_j = \text{do}(x_j)$,
%where idle denotes that the value is determined by the causal
%mechanisms). 
for all possible combinations of interventions and observations, where
the $i_j$  are the values of the intervention variable $I_j$ for 
the event $X_j$, $i_j = \text{idle}$ or $i_j = \text{do}(x_j)$. Here,
$P\left(x_j|  pa_j,i_j =\text{do}(\tilde{x}_j) \right)=
\delta_{x_j,\tilde{x}_j}$ reflects that an intervention on $X_j$
deterministically sets its value, independently of the 
observed values of its causal parents.
If $I_j = \text{idle}$ then the value of $X_j$ only depends on its
causal parents $PA_j$, i.e. $P(x_j|  \{x_i\}_{i\neq j},i_j =
\text{idle}) = P(x_j|  pa_j,i_j = \text{idle})$. 
\begin{figure}
\centering
\includegraphics[scale=.2]{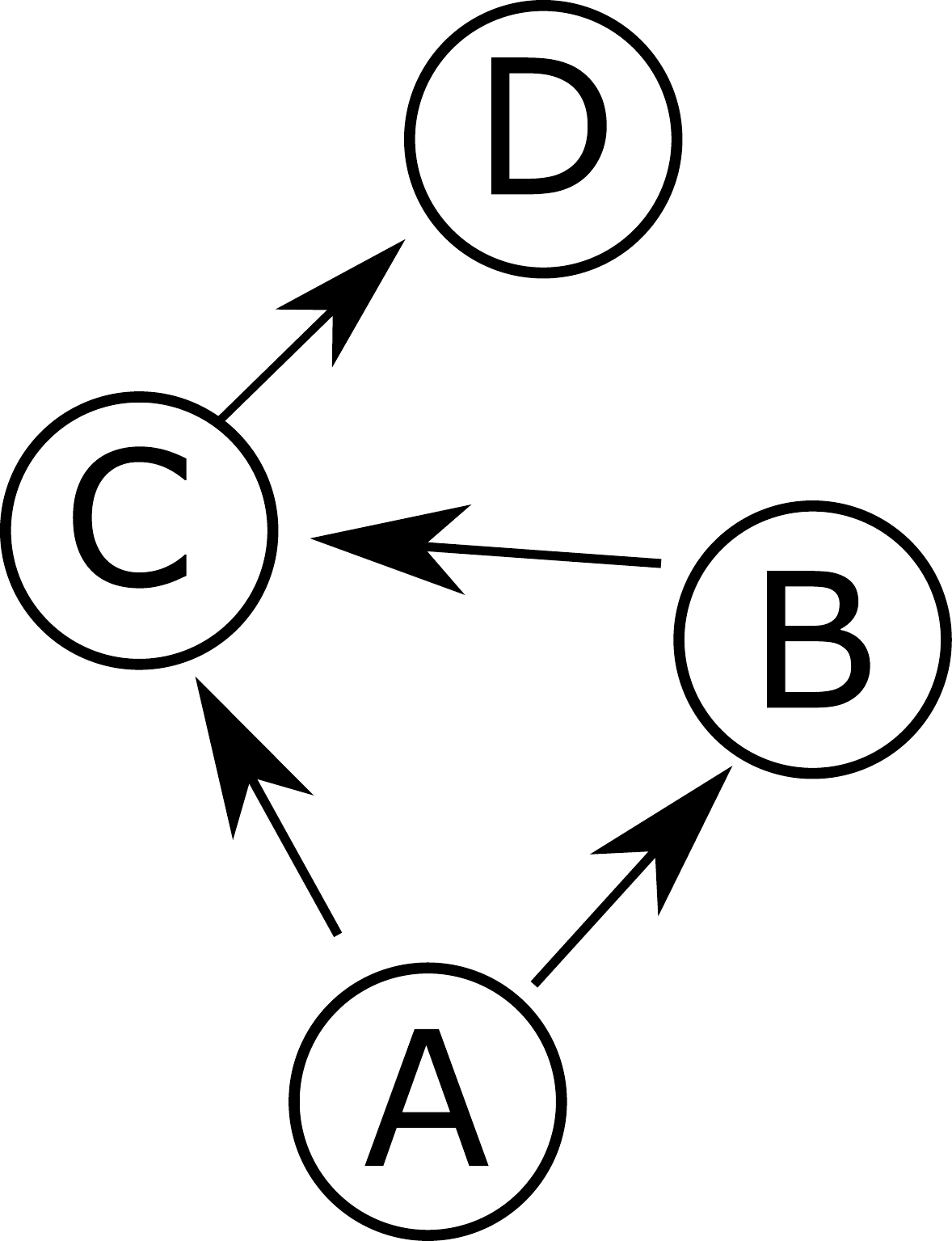}
\caption{Simple DAG in a four party scenario. The parental structure is $\text{PA}_A = \{\},\; \text{PA}_B = \{A\},\; \text{PA}_C = \{A,B\},\;\text{PA}_D = \{C\}$. According to the causal Markov condition, eq.~\eqref{eq:causalMarkov}, the probability distribution then factorizes as ${P(a,b,c,d|i_A, i_B, i_C, i_D) = P(d|c, i_D) P(c|a,b, i_C) P(b|a, i_B) P(a|i_A)}$.}
\label{Fig:4-DAG}
\end{figure}

The field of \textit{causal discovery} or \textit{causal inference}
aims at providing methods to determine the causal model, that is the
                                DAG and the joint-probability
                                distributions entering  
                                \eqref{eq:Pcm} 
 for a given scenario. Different combinations of the $I_j$ correspond
 to different strategies.  If all the interventions are set to
 idle, and hence all the outcomes are determined by the causal
 parents, one has the purely {observational} approach. 
In multivariate scenarios, where more than two random variables are involved,
the observation 
 of the joint probability distribution alone can still contain hints of the
 causal structure 
 based on conditional independencies \cite{Pearl2009a}.  
Nevertheless, in the bivariate scenario, i.e.~when only two random
variables are involved, classical  correlations obtained by
observations do not comprise any causal information. Only if
assumptions for example on the noise distribution are taken a priori,
information on the causal model can be obtained from observational
data \cite{Mooij2016}.  
 
\subsection{Quantum causal inference}
The notion of causal models does not easily translate to quantum
mechanics. The main problem is that in quantum systems not all
observables can have
predefined values independent of observation. 
Similiar to an operational formulation of quantum mechanics
\cite{Chiribella2011}, 
 the \textit{process matrix formalism} was introduced \cite{Oreshkov}
and a quantum version of an \textit{event} defined. 
In
\cite{Costa2016} this is reviewed for the purpose of causal models. In
place of the random variables in the classical case there are local
\textit{laboratories}. 
Within a process each laboratory obtains a quantum system as input and
produces a quantum system as output. 
A quantum event corresponds to information
which is obtained within a laboratory and is associated with a
\textit{completely positive} (CP) map mapping the input Hilbert space
to the output Hilbert space of the laboratory. The possible events
depend on the choice of \textit{instrument}. An instrument is a set of
CP maps that sum to a \textit{completely positive trace preserving}
(CPTP) map. For example an instrument can be a projective measurement
in a specific basis, with the events the possible outcomes. The
possibility to choose different instruments mirrors the possibility of
interventions in the classical case \cite[3.3]{Costa2016}. The whole
information about mechanisms, which are represented as CPTP maps, and
the causal connections is contained in a so-called \textit{process
  matrix}. Besides its analogy for a classical causal model, the process
framework goes beyond classical causal structures as it does not
assume such a fixed causal structure \cite{Oreshkov}. This recently
stirred a lot of research \cite{Oreshkov2016, Procopio2015,
  Chiribella2012, Guerin}.
For a more detailed introduction we refer the reader
especially to reference \cite{Costa2016} where a comprehensive
description is provided.\\  

The analogue of causal inference in the classical case is the
reconstruction of a process matrix. This can be done using
informationally complete sets of instruments, theoretically described
in \cite[4.1]{Costa2016} and experimentally implemented in
\cite{Ried2015}. 
Defining a \textit{quantum observational scheme} in
analogy to the classical one 
is not straight forward. In 
general a quantum measurement destroys much of the states' character
and hence can almost never be considered a passive observation. For
example if the system was initially in a pure state $\ket{\psi}$ 
but one measures in a basis such that $\ket{\psi}$ is not an eigenstate
of the projectors onto the basis states, then the measurement truly
changes the state of the 
system and the original state is not reproduced in the
statistical  average. 
In
\cite[sect. 5]{Costa2016} an observational scheme is simply defined as
projective measurements in a fixed basis,  
in particular without
assumptions 
about the 
incoming state of a laboratory and thus without assumptions about the underlying process.
 Another possibility to define an observational scheme 
is based on the idea  that in the classical world 
observations reveal pre-existing properties of physical systems and
that quantum observations should 
reproduce this. 
As a consequence, if
one mixes
the post-measurement states with the probabilities of the
corresponding measurement outcomes, one
should obtain the same state as before the measurement.  That is
ensured if and only if 
operations that do not destroy the quantum character of the
state are allowed, as  
coherences cannot be restored by averaging. 
Ried et al.~\cite{Ried2015} formalized this notion as ``informational
symmetry'', but considered only  preservation of local states.
 For the special case of locally
completely mixed states, 
they showed that projective measurements in arbitrary bases
possess informational symmetry.
This definition of a quantum observational scheme 
is problematic due to two reasons:
Firstly, the allowed class of instruments 
depends on the 
incoming state, i.e.~one can only apply
projective measurements 
that are diagonal in the same basis as the
state itself. This is at variance with the typical motivation for an
observational scheme,
namely that the instruments are restricted
a-priori due to practical reasons. Moreoever, having measurements
depend on the state requires prior knowledge about the state of the
system, but finding out the state of the system is part of the causal
inference (e.g.: are the correlations based on a state shared by
Alice and Bob?). Hence, in general one cannot assume sufficient
knowledge of the state for restricting the measurements such that they
do not destroy coherences.  
\\  
Secondly, the definition is 
unnaturally restrictive 
as it only considers the local state and not the global state. 
For example if Alice and Bob share a singlet state
$\ket{\psi} = \frac{\ket{01}-\ket{10}}{\sqrt{2}}$, then both local
states are completely mixed.  Hence according to the informational
symmetry,  they are allowed to perform
projective measurements in arbitrary bases. 
If Alice and Bob now both measure in the computational basis, they
will each obtain both outcomes with probability $1/2$ and their local
states will remain invariant in the statistical average $\rho_A' =
\rho_A = \frac{\mathbb{1}}{2}=\rho_B' =\rho_B$. However, the global
state does not remain intact. The post-measurement state is given as
$\rho_{AB}' = \frac{1}{2} \left( \ket{01}\bra{01} +
  \ket{10}\bra{10}\right)$ which is not even entangled anymore. But
even defining a  ``global informational symmetry'', i.e.~requiring the
global state to remain invariant,  does not settle the issue in a
convenient way, as this would not allow any local measurements of
Alice and Bob.\\ 

Here we propose three different schemes ranging from  full quantum
interventions over a quantum-observational scheme with the
possibility of an active
choice of measurements, to a passive quantum observational scheme in a
fixed basis that comes closest to the classical observational scheme.

\begin{table}
\begin{tabular}{{|c||>{\centering\arraybackslash}p{2.5cm}|>{\centering\arraybackslash}p{2.5cm}|>{\centering\arraybackslash}p{2.5cm}||>{\centering\arraybackslash}p{2.5cm}|>{\centering\arraybackslash}p{2.5cm}|}}
\hline
& arbitrary \mbox{instruments}  & arbitrary projections & fixed basis projection &signaling& causal inference\\ \hline
Q-interventionist & $\surd$  & $\surd$ &$\surd$ &$\surd$ & $\surd$\\
Active Q-observational & X & $\surd$ & $\surd$  &$\surd^1$ & $(\surd)^2$\\
Passive Q-observational & X&X& $\surd$&X&X$^3$\\
\hline
\end{tabular}
\caption{\textbf{Quantum schemes for causal inference}: An overview of
  instruments allowed within different quantum schemes defined in this
  section. $\surd$ indicates allowed/possible, X indicates not
  allowed/impossible. $^1$In the active quantum observational scheme
  signaling is possible in principle.  However, in the scenarios
  considered in this work signaling is not possible, and still causal
  inference can be successful. $^2$The potential of causal inference
  in the active quantum-observational scheme is discussed in the main part of
  this paper. $^3$In the passive quantum-observational scheme no more causal
  inference than classical is possible. 
}  
\end{table}

The definitions are based on restricting the allowed set
of instruments. An instrument is to be understood in the
process-matrix context. In all three schemes the set of allowed
instruments is 
independent of the actual underlying processes, which is a reasonable
assumption, since the motivation for 
 causal inference comes from the fact that states or processes are not
 known in the first place.
\begin{itemize}
\item[]\textbf{Quantum interventionist scheme:} Arbitrary instruments
  can be applied in local laboratories. These include for example deterministic
  operations such as state preparations 
 or simply projective measurements.
  An appropriate choice of the instruments enables one
to detect causal structure in
  arbitrary scenarios, i.e.~to reconstruct the process
  matrix \cite{Costa2016}. This scheme resembles most closely an
  interventionist scheme in a classical scenario
but offers additional quantum-mechanical possibilities of intervention.
  
\item[]\textbf{Active quantum-observational scheme:} Only projective
  measurements 
  in arbitrary orthogonal bases 
are allowed, but no post-processing of the
  state after the measurement. 
The latter request translates the
idea of not intervening in 
  the quantum realm, as it is not possible to 
  deterministically  
  change the state by the experimenters choice.
  Depending on the state and the instrument, the
  state may change during the measurement, hence the scheme is invasive, but the
  difference to the classical observational scheme arises solely from
  the possible destruction of quantum coherences. 
    This is a quantum
  effect without classical correspondence and hence opens up a new
  possibility of defining an observational scheme that has no
  classical analogue.
    Repetitive
  application of the same measurement within a single run
always gives the same output. Furthermore, we allow projective
measurements in different bases in different runs
of the experiment. This 
freedom allows one to completely characterize the incoming
state. \\  
This scheme allows for signaling, i.e.~there exist
processes for which Alice's choice of instrument changes the
statistics that Bob observes. As an example consider the process,
where Alice always obtains a qubit in the state 
 $\ket{1}$.  She applies her
instrument on it, and then the outcome is propagated to Bob by the
identity channel. Bob measures in the basis where $\ket{1}$ is an
eigenstate. If Alice measured in the same basis as Bob, then both of
them deterministically obtain 1 
as result. If Alice  instead
measures in the basis 
$\left\{\ket{\pm} = \frac{1}{\sqrt{2}} 
(\ket{0}\pm\ket{1})\right\}$, then Bob would obtain 
1 only with
probability $\frac{1}{2}$. This is considered as 
signaling according 
to the definition in \cite{Costa2016}. Clearly, signaling presents a
direct quantum advantage for causal inference compared to a classical
observational 
scheme, and motivates the attribute ``active'' of the scheme. 
In the present 
work we focus on this scheme, but 
exclude such a direct quantum advantage
by considering exclusively unital channels and a completely 
mixed incoming state for Alice,  as was done also in
\cite{Ried2015}.
It is then impossible for Alice to send a signal to Bob if her
instruments are restricted to 
quantum observations, even if she is allowed to actively 
set
her measurement basis. 
One might wonder whether the quantum-observational scheme can be
generalized to POVM measurements. However, these do not fit into the
framework of 
instruments that transmit an input state to an output state, as POVM measurements
do not specify the post-measurement state.

\item[]\textbf{Passive quantum-observational scheme:}
For the whole setup a fixed 
basis is selected. Only projective
measurements with respect to this basis are 
permitted, and it is 
forbidden to change the basis 
in different runs of the experiment. This is also what is used in
\cite{Costa2016} to obtain 
classical causal models as a limit of quantum causal models. 
Since the
basis is fixed independently of the 
underlying process, the
measurement can still be invasive in the sense that it can destroy
coherences, and  hence it is still not a pure observational scheme in the
classical sense.  
Nevertheless, Alice cannot signal to Bob here  as she
has no possibility of actively encoding
information in the quantum state, regardless of the nature of the
state, which motivates the name ``passive
quantum-observational scheme''. As without any change of basis it is
impossible to exploit 
stronger-than-classical quantum correlations, this
scheme comes closest to a classical observational scheme. 
And due to the restriction
to observing at most classical 
correlations, it is not possible to infer anything more about the
causal structure than classically possible. 
\end{itemize}

\section{Affine representation of quantum channels and steering
  maps} \label{sec:positiveMaps} 
In this section we introduce the 
tools of quantum
information theory 
that we need to analyze the problem of causal inference in section
\ref{sec:Results}. 

\subsection{Bloch-sphere representation of qubits}
A qubit is a
quantum system with a two-dimensional Hilbert space with basis states  denoted as $\ket{0}$ and $\ket{1}$. An 
arbitrary state of 
the qubit is described by a density operator $\rho$, 
a positive linear operator with unit trace, $\rho \geq 0, \; \tr{\rho} =
1$. Every single-qubit state can be represented geometrically by its
\textit{Bloch-vector} $\boldsymbol{r} = \tr{\rho
  \boldsymbol{\sigma}}$, with $|\boldsymbol{r}| \leq 1$   as 
\begin{align}
\rho = \frac{\mathbb{1}+\boldsymbol{r}\cdot \boldsymbol{\sigma}}{2},
\end{align}
where $\boldsymbol{\sigma} = (\sigma_1, \sigma_2, \sigma_3)^T$ denotes the vector of Pauli matrices.  

\subsection{Channels}
A quantum channel $\mathcal{E}$ is 
a \textit{completely positive trace preserving map} (CPTP map). A 
quantum channel
maps a density operator in the space of linear operators $\rho \in \mathcal{L}(\mathcal{H})$ on the Hilbert space $\mathcal{H}$ to a density operator in the space of linear operators $\rho' \in \mathcal{L}(\mathcal{H}')$ on a (potentially different) Hilbert space $\mathcal{H}'$.
\begin{align*}
\mathcal{E} : \rho \rightarrow \mathcal{E}(\rho) \equiv \rho', \qquad \rho, \rho' \geq 0, \; \tr{\rho}= \tr{\rho'} = 1.
\end{align*} 
This formalism describes any physical dynamics of a quantum
system. Every quantum channel can be understood as the unitary
evolution of the system coupled to an environment \cite{Nielsen2009}.  
The 
constraint of complete positivity can be understood the following
way. If we extend the map $\mathcal{E}$ with the identity operation of
arbitrary dimension, the composed map $\mathcal{E} \otimes
\mathbb{1}$, which acts on a larger system, should still be
positive. 
An example of a map that is positive but not completely positive is
the transposition 
map, that, if extended to a larger system, maps entangled states to
non-positive-semi-definite operators 
\cite[chapter 11.1]{Bengtsson2006}. \\ 

\textit{Geometrical representation of qubit maps}\\
Every qubit channel (a quantum channel mapping a qubit state onto a
qubit state)  $\mathcal{E}$ can  be described completely by its action on the Bloch sphere, see \cite{Fujiwara1999,Braun2014, BethRuskai2002} and is completely described by the matrix $\Theta_\mathcal{E}$ mapping the 4D Bloch vector $(1,\boldsymbol{r})$,
\begin{align}
\Theta_\mathcal{E} = \begin{pmatrix}
1 & 0 \\
\boldsymbol{t_\mathcal{E}} & T_\mathcal{E}
\end{pmatrix},
\end{align}
where the upper left 1 ensures trace preservation.
A
state $\rho$ described by its Bloch vector $\boldsymbol{r}$ is then
mapped by the quantum channel $\mathcal{E}$ to the new state $\rho'$
with Bloch vector  
\begin{align*}
\boldsymbol{r}' = T_\mathcal{E} \boldsymbol{r} +\boldsymbol{t_\mathcal{E}}.
\end{align*}
A qubit channel is called \textit{unital} if it leaves the completely
mixed state  invariant: $\mathcal{E}(\rho_\text{mixed}) = \rho_\text{mixed}$, with $\rho_\text{mixed} = \frac{\mathbb{1}}{2}$,
i.e.~$\boldsymbol{r}_\text{mixed} = \boldsymbol{0}$. For unital
channels $\boldsymbol{t}_\mathcal{E}$ vanishes. The whole information
is then contained in the 3x3 real matrix $T_\mathcal{E}$, 
which we refer to as \textit{correlation matrix} of the channel. The  matrix
$T$ (from now on we
drop the index $\mathcal{E}$) can be expressed by writing it in its
signed singular value decomposition \cite[eq. (9)]{Braun2014},
\cite[eq. (10.78)]{Bengtsson2006} (see also the appendix around
equation  \eqref{def:App_SSV}),
\begin{align}
T =R_1 \eta R_2. \label{eq:signed_singulars}
\end{align}
Here, $R_1$ and $R_2$ are  proper rotations (elements of the $SO(3)$
group), corresponding to unitary channels, that is $R_iR_i^T =
\mathbb{1}$ with $\det(R_i)=1$, and $\eta= \diag(\eta_1,\eta_2,
\eta_3)$ is a real diagonal matrix. This can be interpreted rather
easily. A unital qubit channel maps the Bloch sphere onto an
ellipsoid, centered around the origin, that fits inside the Bloch
sphere. First the Bloch sphere is rotated by $R_2$ than it is
compressed along the coordinate axis by factors $\eta_i$. The
resulting ellipsoid is then again rotated. Hence, apart from unitary
freedom in the input and output, the unital 
quantum channel is completely
characterized by its 
\textit{signed singular values} (SSV) $\bm \eta$
\cite[II.B]{Braun2014}. The CPTP property gives restrictions to the
allowed values of 
 $\boldsymbol{\eta} \equiv (\eta_1,\eta_2, \eta_3)^T$. These are commonly known as the \textit{Fujiwara-Algoet} conditions 
 \cite{Fujiwara1999,Braun2014} 
 \begin{align}
 \begin{aligned}
 1+\eta_3 \geq |\eta_1+\eta_2|,\\
1-\eta_3 \geq |\eta_1-\eta_2|.
 \end{aligned}\label{eq:Fujiwara}
\end{align}  The allowed values for $\boldsymbol{\eta}$ lie inside a tetrahedron $\mathcal{T}_\text{CP}$ (the index CP stands for completely positive),
\begin{align}
\mathcal{T}_\text{CP} \equiv \text{Conv}\left(\left\lbrace\boldsymbol{v}^\text{CP}_i \right\rbrace_i\right), \label{def:T_CP}
\end{align} 
where $\text{Conv}\left(\left\lbrace x_i\right\rbrace_i\right) \equiv \left\lbrace \sum_i p_i x_i | p_i\geq 0, \sum_i p_i =1\right\rbrace$ denotes the convex hull of the set $\left\{x_i\right\}_i$ and the vertices are defined as, 
\begin{align}
\begin{aligned}\label{eq:CP_vertices}
\boldsymbol{v}^\text{CP}_1 &= (1,1,1)^T,\\
\boldsymbol{v}^\text{CP}_2 &= (-1,-1,1)^T,\\
\boldsymbol{v}^\text{CP}_3 &= (-1,1,-1)^T,\\
\boldsymbol{v}^\text{CP}_4 &= (1,-1,-1)^T.
\end{aligned}
\end{align}

For a more detailed discussion of qubit maps we refer the reader to chapter 10.7 of \cite{Bengtsson2006}.

%Not to be considered at the moment:
%\tiny
% Especially every 2x2 unitary Matrix $U$ can be written as $U = e^{i\alpha} e^{i\phi \boldsymbol{n}\boldsymbol{\sigma}}$ \cite[eq. 4.9]{Nielsen2009}, where the phase $\alpha$ has no physical significance and is skipped. The action of a unitary transformation on the Bloch sphere is simply a rotation of the Bloch vector around the axis $\boldsymbol{n}$ by an angle $2\phi$, given by a 3x3 rotation matrix $T_e$ with $\det(T_e)=1$, hence the picture of the Bloch sphere is again a unit sphere. \red{Can we reconstruct $T_e$ as we did for the bipartite states with the correlations? find a reference?}.\\
% \normalsize
% 
 \subsection{Steering}
In quantum mechanics, measurement outcomes on two spatially separated
partitions of a composed quantum system can be highly correlated \cite{Bell1964}, and further the choice of measurement
operator on one side can strongly influence or even determine the
outcome on the other side \cite{Schrodinger1935}, a phenomenon known
as ``steering''. Suppose Alice and
Bob share the two qubit state $\rho_{AB}$. If Alice performs a
measurement on it, leaving her 
qubit in the state $\rho_A$ then Bob's qubit is steered to the state $\rho_B$
proportional to the (unnormalized) state $ \ptr{A}{\rho_{AB} (\rho_A
  \otimes \mathbb{1})}$ \cite[p.2]{Jevtic}.  This defines a positive
linear trace preserving map $\mathcal{S}: \rho_A \rightarrow
\mathcal{S}(\rho_A) = \rho_B$, called \textit{steering map}, that
depends on the state $\rho_{AB}$.
 \\
Steering maps have been intensely studied especially in terms of entanglement characterization \cite{Jevtic,Milne2014}. In analogy to the treatment of qubit channels, we can associate an unique ellipsoid inside the Bloch sphere with a two-qubit state, known as steering ellipsoid, that encodes all the information about the bipartite state \cite{Jevtic}.\\%\cite[discussion around eq. (3)]{Jevtic2013}.\\
 Every bipartite two qubit state can be expanded in the Pauli basis as
\begin{align*}
\rho_{AB} = \frac{1}{4} \sum_{\mu, \nu =0}^3 \Theta_{\mu \nu} \sigma_\nu \otimes \sigma_\mu,
\end{align*}
where 
\begin{align}
\Theta_{\mu\nu} = \tr{\rho_{AB} \sigma_\nu \otimes \sigma_\mu}.
\end{align}
Note that we defined $\Theta$ to be the transposed of the one defined in \cite{Jevtic}, since we want to treat steering from Alice to Bob. The matrix contains all the information about the bipartite state and can be written as
\begin{align*}
\Theta = \begin{pmatrix}
1 & \boldsymbol{a}^T \\
\boldsymbol{b} & T_\mathcal{S}
\end{pmatrix},
\end{align*}
where $\boldsymbol{a}$ ($\boldsymbol{b}$) denotes the Bloch vector of Alice's (Bob's) reduced state. $T_\mathcal{S}$ is a 3x3 real orthogonal matrix and encodes all the information about the correlations, %\q added half sentence
and we will refer to it as \textit{correlation matrix} of the steering map.\\
In this work we 
only consider bipartite qubit states which have completely mixed
reduced states $\ptr{A}{\rho_{AB}} = \ptr{B}{\rho_{AB}} =
\mathbb{1}/2$ or equivalently  $\boldsymbol{a}=
\boldsymbol{b}=\boldsymbol{0}$. In analogy to unital channels we 
call such states \textit{unital two-qubit states} and the
corresponding maps \textit{unital steering maps}. Up to local unitary
operations on the two partitions, the 
correlation matrix $T_\mathcal{S}$ is
characterized by its signed singular values
$\eta_1,\eta_2,\eta_3$. The allowed values of these are given through
the positivity constraint on the density operator %(and unit trace)
$\rho_{AB}$ defined up to local unitaries as (cf.~equation (6) in
\cite{Milne2014}) 
\begin{align}
\rho_{AB}  = \frac{1}{4}\left(\mathbb{1}\otimes \mathbb{1} + \sum_{i=1}^3 \eta_i \sigma_i\otimes \sigma_i\right). 
\end{align}
The positivity of $\rho_{AB}$ implies the conditions (the derivation is analogue to the derivation of (10)-(15) in \cite{Braun2014})
\begin{align}
\begin{aligned}
1+\eta_3 \geq |\eta_1-\eta_2|,\\
1-\eta_3 \geq |\eta_1+\eta_2|.
\end{aligned}
\end{align}
These are the same as for unital qubit channels
(eq.~\eqref{eq:Fujiwara}) up to a sign flip, 
%This conditions 
and define the tetrahedron $\mathcal{T}_\text{CcP}$ of unital \textit{completely co-positive trace preserving maps} (CcPTP) \cite{Bengtsson2006,Braun2014},
\begin{align}
\mathcal{T}_\text{CcP} \equiv \text{Conv}\left(\left\lbrace\boldsymbol{v}^\text{CcP}_i \right\rbrace_i\right), \label{def:T_CcP}
\end{align}
with the vertices
\begin{align}\label{eq:CcP_vertices}
\begin{aligned}
\boldsymbol{v}^\text{CcP}_1 &= (-1,-1,-1)^T,\\
\boldsymbol{v}^\text{CcP}_2 &= (-1,1,1)^T,\\
\boldsymbol{v}^\text{CcP}_3 &= (1,-1,1)^T,\\
\boldsymbol{v}^\text{CcP}_4 &= (1,1,-1)^T.
\end{aligned}
\end{align}

CcPTP maps are exactly CPTP maps with a preceding transposition map,
i.e.~for every steering map $\mathcal{S}$ there exists a quantum
channel $\mathcal{E}$ such that $\mathcal{S} = \mathcal{E}\circ
\mathcal{T}$, where $\mathcal{T}$ is the transposition map  
with respect to an arbitrary but fixed basis (see e.g.~\cite{Bengtsson2006}).
 \begin{figure}
\begin{subfigure}{0.3\textwidth}
\centering
\includegraphics[scale=0.33]{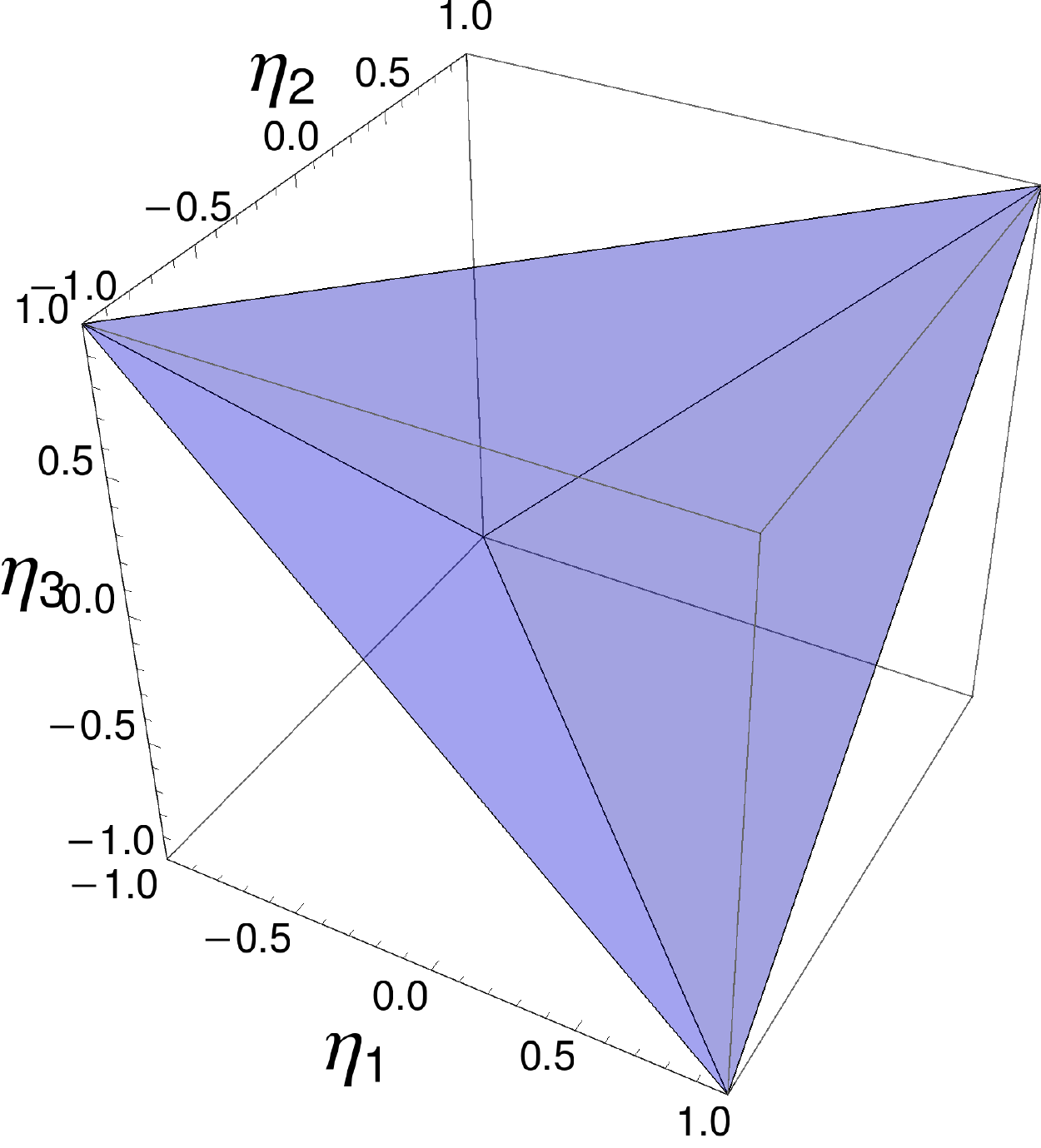}
\end{subfigure}
\begin{subfigure}{0.3\textwidth}
\centering
\includegraphics[scale=0.33]{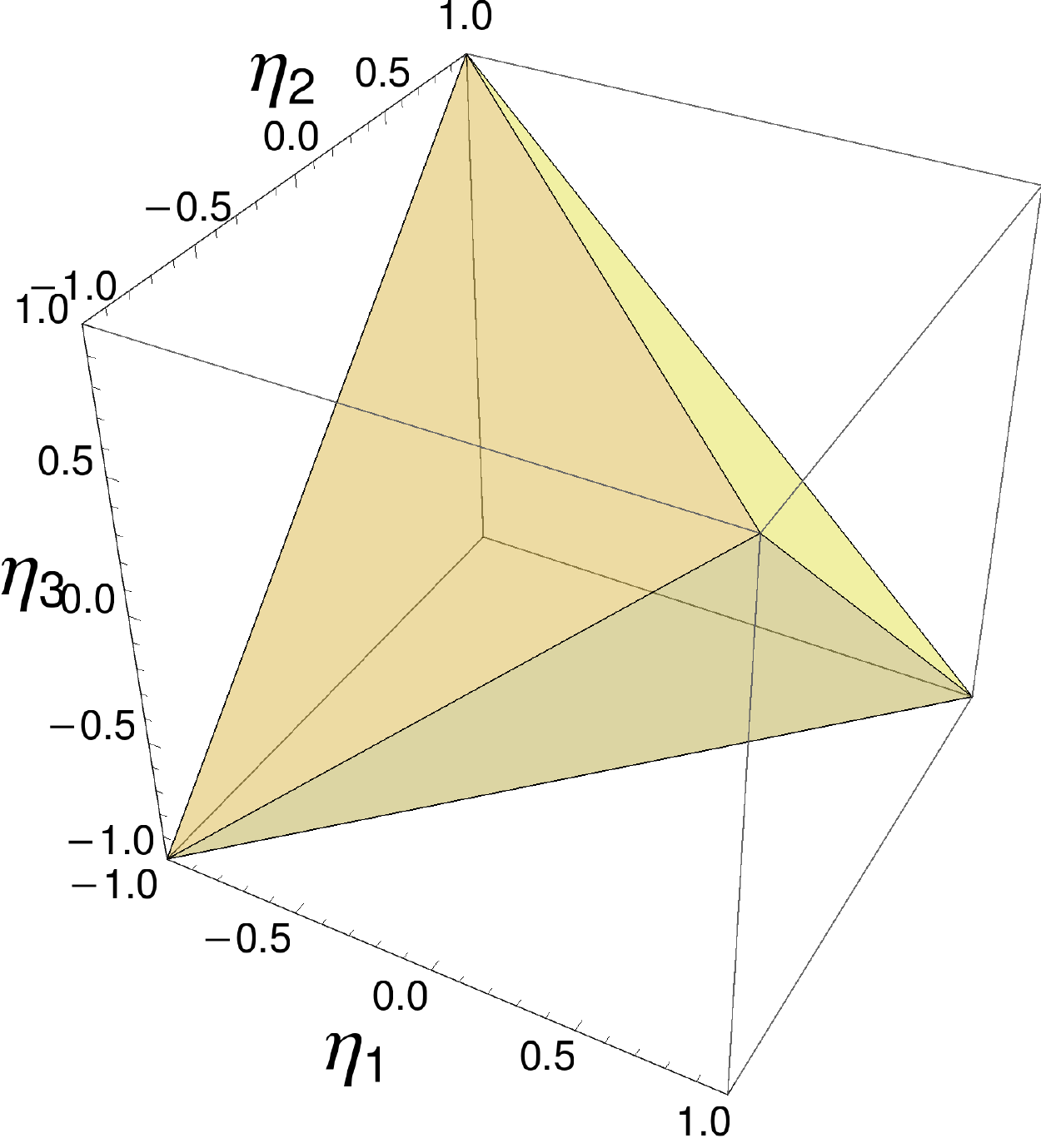}
\end{subfigure}
\begin{subfigure}{0.3\textwidth}
\centering
\includegraphics[scale=0.33]{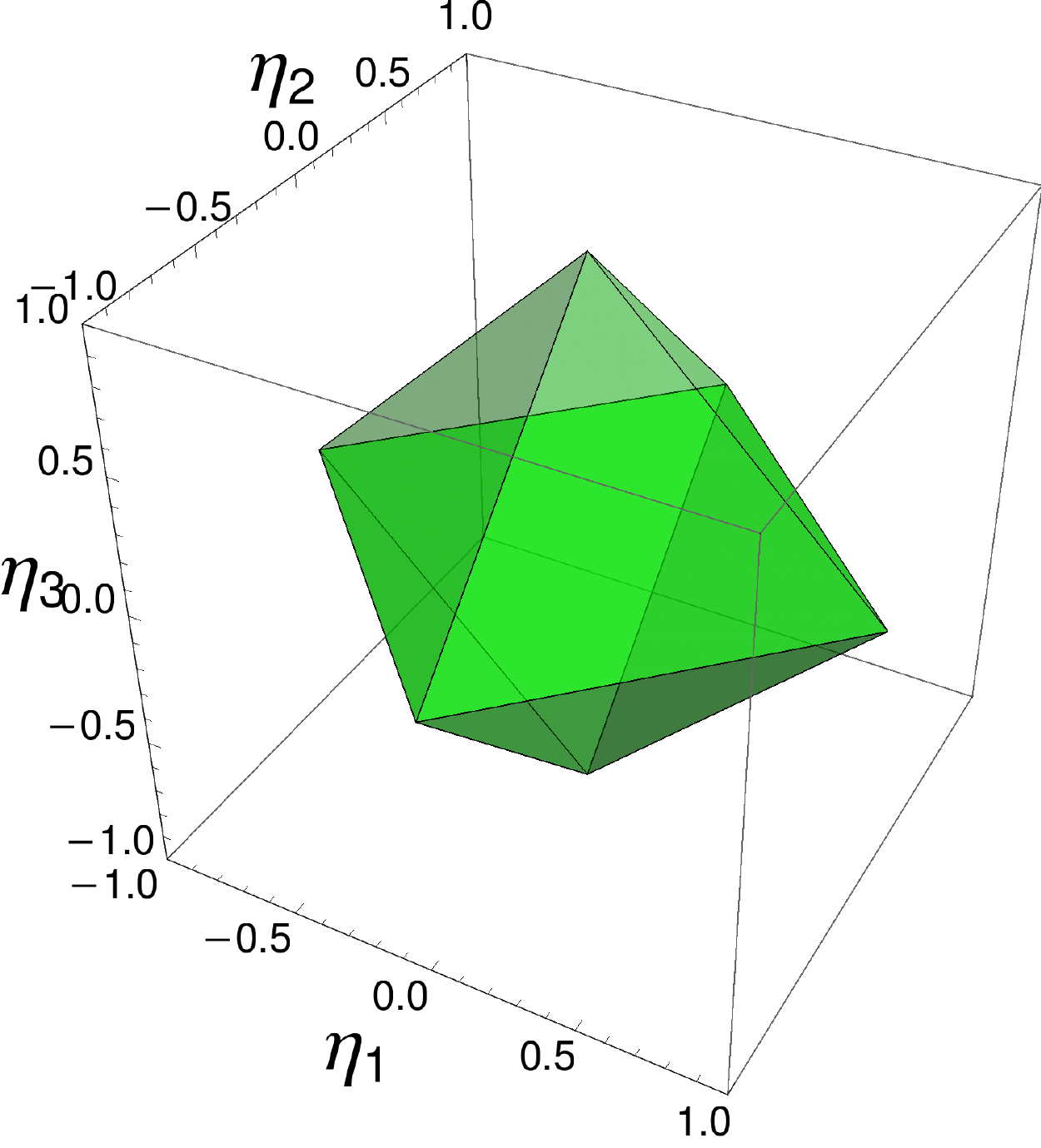}
\end{subfigure}
\caption{\textbf{Geometry of positive maps}: For positive trace
  preserving single-qubit maps, the allowed signed singular values lie
  within a cube $\mathcal{C}$ defined in \eqref{def:C}. Quantum
  channels corresponding to CPTP maps lie within the blue tetrahedron
  $\mathcal{T}_\text{CP}$ defined in \eqref{def:T_CP}, steering maps
  corresponding to CcPTP maps lie within the yellow tetrahedron
  $\mathcal{T}_\text{CcP}$ defined in \eqref{def:T_CcP}. The maps with
  SSV inside the intersection of $\mathcal{T}_\text{CP}$ and
  $\mathcal{T}_\text{CcP}$ (green octahedron) are called
  superpositive. These maps only produce classical correlations
corresponding to separable states or
  entanglement breaking channels, but can also be generated by
  mixtures of quantum correlations. }
\label{Fig:SSV}
\end{figure}

\subsection{Positive maps}
We have seen that a quantum channel is a CPTP map and that a steering
map is a CcPTP map. Both of them are necessarily positive maps. But
are there positive maps that are neither CcP nor CP? Or are there maps
that are even both?  This issue is nicely worked out in \cite[chapter
11]{Bengtsson2006}. We 
shortly review this  for unital 
qubit maps. Since we still deal with linear maps, it is
straightforward that also every 
unital positive one-qubit map can be
described by a $3\times 3$ correlation matrix. Hence we can
also analyze its SSV. 
The allowed SSV are inside the cube $\mathcal{C}$ defined by \cite[FIG.11.3]{Bengtsson2006} 
\begin{align}
\mathcal{C} \equiv \left\lbrace \boldsymbol{x}| -1\leq x_i \leq 1 \text{ for } i= 1,2,3\right\rbrace. \label{def:C}
\end{align}
This is illustrated in FIG.\ref{Fig:SSV}. Note again that we only
treat unital maps. \\
We see that there are  
positive maps which are neither CP nor CcP. 
According to the
\textit{St{\o}rmer-Woronowicz theorem} (see e.g.~\cite[p. 258]{Bengtsson2006})
 every positive qubit 
map is \textit{decomposable}, i.e.~it can be written as a convex
combination of a CP and a CcP map. Maps that are both CP and CcP are
called \textit{super positive} (SP). The set 
of allowed SSV of the correlation matrices of these maps forms an
octahedron (green region in FIG.\ref{Fig:SSV})  
given as 
\begin{align}
\mathcal{O}_\text{SP} = \text{Conv}\left(\{\pm \hat{e}_i | i \in {x,y,z}\}\right), \label{eq:octahedron}
\end{align} 
where $\hat{e}_i$ denotes the unit vector along the $i$-axis.
These correlations are generated by entanglement breaking quantum channels \cite{Ruskai2003} and 
steering maps based on separable states %which are considered classical 
\cite{Jevtic}. 
When such \textit{classical} correlations are observed one cannot infer anything about the causal structure \cite[p.10 of supplementary information]{Ried2015}. \\
For higher dimensional systems things change. Already for three dimensional maps, i.e.~qutrit maps, there exist positive maps, that cannot be represented as a convex combination of a CP and a CcP map \cite[chapter 11.1]{Bengtsson2006}.
In the next section we discuss how much information about causal influences we can obtain  by looking  only at the SSV related to the correlations Alice and Bob can observe in a bipartite experiment.

\section{Causal explanation of unital positive maps} \label{sec:Results}
\subsection{Setting}
We now tackle the problem of causal inference in the two-qubit
scenario \cite{Ried2015}. 
The setting is as follows. An experimenter, Alice, sits in her
laboratory. She opens her door just long enough to obtain
a qubit in a (locally) completely mixed state and closes the door
again. She performs an projective measurement in any of the
Pauli-states, records her outcome, opens her door again and puts the
qubit in the now collapsed state outside. Apart from the qubit she has no way of
interacting with the environment. Some time later another
experimenter, Bob, opens the door of his laboratory and obtains a
qubit. Also he measures in the eigenbasis of
one of the Pauli matrices and 
 records the outcome.
They repeat this procedure a large (ideally: an infinite)
number of times. Then they meet and analyze their joint
measurement outcomes.  
These define the probabilities $P(a, b| j, i)$ for the outcomes
$a\in\{-1,1\}$ and $b\in\{-1,1\}$ of Alice's and Bob's measurements, 
given they measured in the eigenbasis of the $j$th and
$i$th Pauli matrix, respectively. For the marginals  we assume
$P(a|j,i) = \sum_b P(a,b|j,i) = 1/2 \,\, \forall a\in \{-1,1\}$ and 
accordingly for Bob. They are thus able to define a correlation matrix
$M$ with elements 
\begin{align}
M_{ij} =  2 P(b=1|j,i, a=1)-1 = \braket{\sigma_j \sigma_i},
\end{align}
where  $P(b=1|j,i, a=1)$ is the probability that Bob obtains outcome
$1$ when measuring the observable $\sigma_i$, conditioned on Alice's
measurement of $\sigma_j$ with outcome $1$, and
$\braket{\sigma_j\sigma_i}$ denotes the expectation value of the
product of Alice's $\sigma_j$ and Bob's $\sigma_i$ measurement
outcomes. \\
The correlation matrix defines a unique positive
trace preserving unital map $\mathcal{M}: \rho_A \mapsto \rho_B$.  
They 
are guaranteed one of the following three possibilities: 
either they measured the same qubit, which was 
propagated in terms of a unital quantum channel $\mathcal{E}$ from
Alice to Bob; 
or that they each measured one of the two 
qubits in a unital
bipartite state $\rho_{AB}$ acting as a common cause,
and hence the correlations where caused by 
the corresponding steering map $\mathcal{S}$; or that the map 
from $\rho_A$ to $\rho_B$ is a 
probabilistic mixture where with probability $p$ the steering map
$\mathcal{S}$ was realized and with probability $(1-p)$ the quantum
channel $\mathcal{E}$, that is 
 \begin{align}
\mathcal{M} = (1-p) \mathcal{E} + p \mathcal{S}, \label{eq:decomposition}
\end{align} with the ``causality parameter'' $p \in [0,1]$. 
\begin{figure}
\centering
\includegraphics[scale=.5]{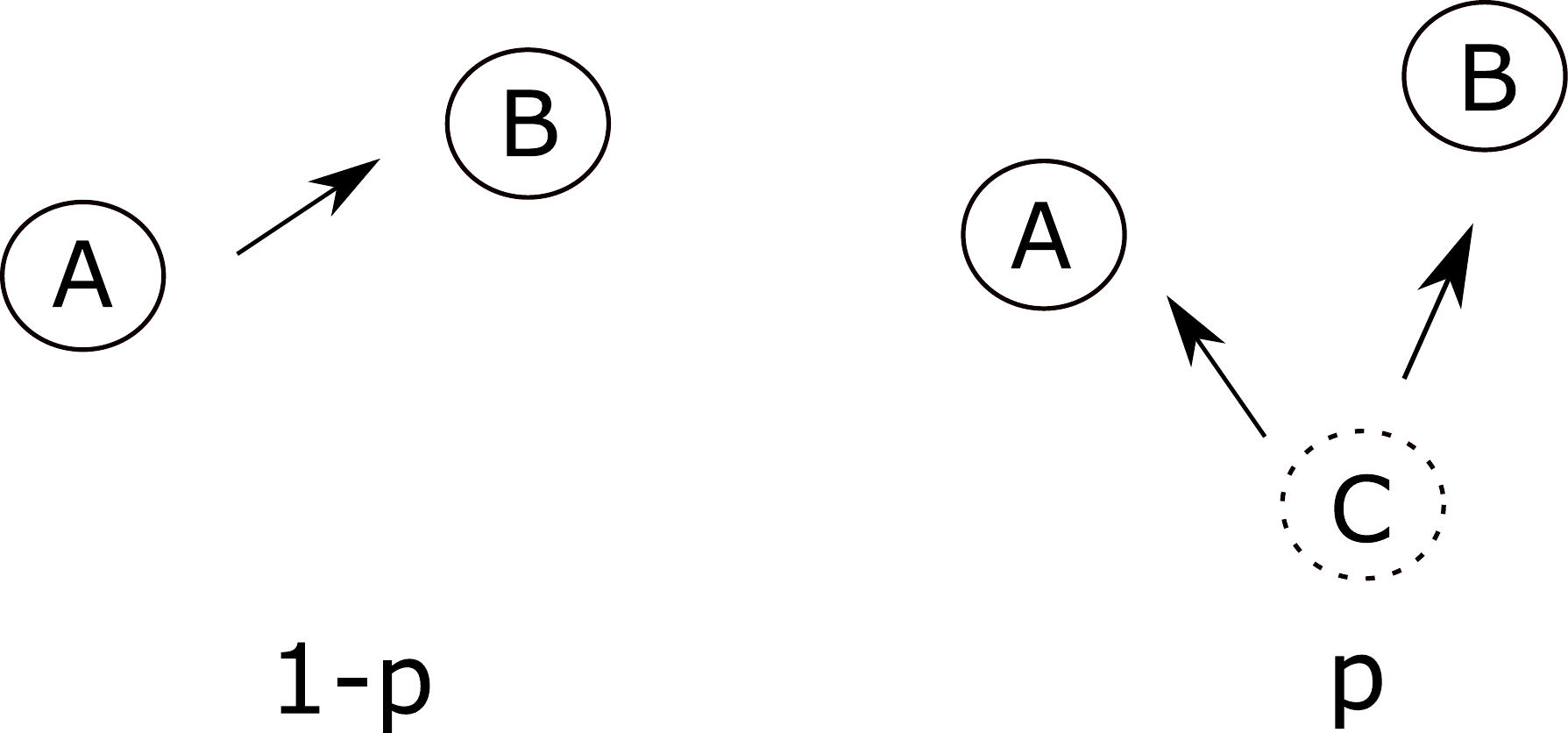}
\caption{\textbf{DAG:} The DAGs of our setting. On the left side with
  probability $(1-p)$ a quantum channel $\mathcal{E}$ is realized,
  causing correlations between Alice ($A$) and Bob ($B$). On the right
  side, occuring   
with probability $p$, the correlations are caused by an unobserved source $C$ that outputs the state $\rho_{AB}$ generating correlations through the steering map $\mathcal{S}$.}
\end{figure}
The task of Alice and Bob is now to find the true value of $p$ and
possibly also the nature of $\mathcal{S}$ and $\mathcal{E}$. In
general there 
does not exist a unique solution and in this case they want to find
the values of $p$ for which maps of the form \eqref{eq:decomposition}
explain the observed correlations.\\ 
As we mentioned in the previous section, every positive one qubit map
is decomposable, so 
a possible explanation always exists. The decomposition
\eqref{eq:decomposition} can be given a causal interpretation, where
$\mathcal{E}$ is considered to be a cause-effect explanation of the
correlations and $\mathcal{S}$ a common-cause.  
\\
In the following subsections we 
give bounds on the causality parameter $p$ and then consider some
extremal cases. In subsection \ref{sec:AdditionalConst}  
 we  
%review 
generalize a part of the work of Ried et al.~\cite{Ried2015} and see how additional
assumptions on the nature of $\mathcal{E}$ and $\mathcal{S}$ can lead
to a unique solution.

\subsection{Possible causal explanations}
\begin{defn}\textbf{$p$-causality/$p$-decomposability:} 
A single qubit unital positive trace preserving map $\mathcal{M}$ is called $p$-causal/$p$-decomposable with $p \in [0,1]$, if it can be written as
\begin{align}
\mathcal{M} = (1-p) \mathcal{E} + p \mathcal{S}, \label{def.pdec}
\end{align}
with $\mathcal{E}$ ($\mathcal{S}$) being a CPTP (CcPTP) unital %single
qubit map. 
Eq.\eqref{def.pdec} is called a $p$-decomposition of
$\mathcal{M}$. 
\end{defn}
In the following let $M, E, S$ denote the 
correlation matrices of $\mathcal{M}, \mathcal{E}, \mathcal{S}$, and $\boldsymbol{\eta}^\mathcal{M}, \boldsymbol{\eta}^\mathcal{E}, \boldsymbol{\eta}^\mathcal{S}$ the SSV  of $M,E,S$, respectively. 
We first
investigate for a fixed $p$ what the
possible 
SSV of 
the correlation matrix of a map $\mathcal{M}$ are, such that
$\mathcal{M}$ is $p$-causal. This leads to the following theorem: 

\begin{theo}\textbf{Signed singular values of $p$-causal maps}\label{theo:p-causal}\\
Let $\mathcal{M}$ be a positive unital trace preserving qubit map with
associated SSV given by $\boldsymbol{\eta}^\mathcal{M}$. Let $p \in
[0,1]$ be fixed. Then the following statement holds:  
\begin{align}
\mathcal{M}\text{ is $p$-causal} \Leftrightarrow \boldsymbol{\eta}^\mathcal{M} \in \mathcal{C}_p,
\end{align}
where 
\begin{align}
\mathcal{C}_p = \text{Conv}\left( \left\{ (1-p) \boldsymbol{v}_i^\text{CP} + p \boldsymbol{v}_j^\text{CcP}| i,j \in\{1,2,3,4\}\right\} \right), \label{def:C_p}
\end{align}
where the vertices $\boldsymbol{v}_i^\text{CP}$ of CP maps are given in
\eqref{eq:CP_vertices}, and 
the vertices $\boldsymbol{v}_j^\text{CcP}$ of CcP maps in
\eqref{eq:CcP_vertices}.  
\end{theo}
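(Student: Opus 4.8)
The plan is to pass to the affine ($3\times3$) picture, where \eqref{def.pdec} reads $M=(1-p)E+pS$, and to prove the two inclusions separately. Two structural observations organize everything. First, $\mathcal{C}_p$ is exactly the Minkowski sum $(1-p)\mathcal{T}_\text{CP}+p\mathcal{T}_\text{CcP}$, so every point of $\mathcal{C}_p$ splits as $(1-p)\boldsymbol{\alpha}+p\boldsymbol{\beta}$ with $\boldsymbol{\alpha}\in\mathcal{T}_\text{CP}$ and $\boldsymbol{\beta}\in\mathcal{T}_\text{CcP}$ (group the convex weights $\lambda_{ij}$ of the $16$ generators in \eqref{def:C_p} by their row and column marginals). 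Second, expanding the absolute values in \eqref{eq:Fujiwara} shows the two tetrahedra are \emph{dual} to each other, $\mathcal{T}_\text{CP}=\{\boldsymbol{\eta}:\boldsymbol{v}_j^\text{CcP}\cdot\boldsymbol{\eta}\le1\ \forall j\}$ and $\mathcal{T}_\text{CcP}=\{\boldsymbol{\eta}:\boldsymbol{v}_i^\text{CP}\cdot\boldsymbol{\eta}\le1\ \forall i\}$: the facet normals of one tetrahedron are the vertices of the other.

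For the direction $\boldsymbol{\eta}^\mathcal{M}\in\mathcal{C}_p\Rightarrow\mathcal{M}$ is $p$-causal I would use a \emph{simultaneous} signed singular value decomposition. Fix a splitting $\boldsymbol{\eta}^\mathcal{M}=(1-p)\boldsymbol{\alpha}+p\boldsymbol{\beta}$ as above and write $M=R_1\,\diag(\boldsymbol{\eta}^\mathcal{M})\,R_2$ with $R_1,R_2\in SO(3)$. Setting $E:=R_1\,\diag(\boldsymbol{\alpha})\,R_2$ and $S:=R_1\,\diag(\boldsymbol{\beta})\,R_2$ gives $(1-p)E+pS=M$ by linearity, while $E$ and $S$ carry SSV $\boldsymbol{\alpha}$ and $\boldsymbol{\beta}$ (the two rotations are shared and lie in $SO(3)$). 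Since $\boldsymbol{\alpha}\in\mathcal{T}_\text{CP}$ satisfies Fujiwara-Algoet and $\boldsymbol{\beta}\in\mathcal{T}_\text{CcP}$ the analogous positivity conditions, $\mathcal{E}$ is CPTP unital and $\mathcal{S}$ is CcPTP unital, so \eqref{def.pdec} holds.

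The converse is the substantial direction and is where the announced bound on signed singular values of sums enters. The key tool I would establish is the orthogonal-Procrustes identity
\[
\max_{Q\in SO(3)}\tr{QT}=\max_{\boldsymbol{s}:\,\prod_k s_k=+1}\boldsymbol{s}\cdot\boldsymbol{\eta}(T),\qquad \max_{Q\in O(3)^-}\tr{QT}=\max_{\boldsymbol{s}:\,\prod_k s_k=-1}\boldsymbol{s}\cdot\boldsymbol{\eta}(T),
\]
where $O(3)^-$ denotes the improper rotations and $\boldsymbol{s}$ ranges over sign patterns. Writing $T=R_1\diag(\boldsymbol{\eta})R_2$ reduces the left maxima to $\max_{O\in SO(3)}\sum_k O_{kk}\eta_k$, whose value is $|\eta_{(1)}|+|\eta_{(2)}|+\mathrm{sgn}(\det T)\,|\eta_{(3)}|$, matching the right-hand side. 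Note the patterns with product $+1$ are precisely the $\boldsymbol{v}_i^\text{CP}$ and those with product $-1$ the $\boldsymbol{v}_j^\text{CcP}$. Then, using $M=(1-p)E+pS$, linearity of the trace, subadditivity of the maximum, the duality above for $\boldsymbol{\eta}^\mathcal{E}\in\mathcal{T}_\text{CP}$ (each $\boldsymbol{v}_j^\text{CcP}\cdot\boldsymbol{\eta}^\mathcal{E}\le1$), and $\boldsymbol{v}_j^\text{CcP}\cdot\boldsymbol{\eta}^\mathcal{S}\le\boldsymbol{v}_j^\text{CcP}\cdot\boldsymbol{v}_j^\text{CcP}=3$ for $\boldsymbol{\eta}^\mathcal{S}\in\text{Conv}(\boldsymbol{v}_k^\text{CcP})$, I obtain
\[
\max_j \boldsymbol{v}_j^\text{CcP}\cdot\boldsymbol{\eta}^\mathcal{M}=\max_{Q\in O(3)^-}\tr{QM}\le (1-p)\cdot 1+p\cdot 3=1+2p,
\]
and the symmetric estimate over $SO(3)$ gives $\max_i\boldsymbol{v}_i^\text{CP}\cdot\boldsymbol{\eta}^\mathcal{M}\le3-2p$. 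I expect proving the Procrustes identity cleanly --- showing the continuous maximum over the rotation group is attained at the discrete sign-flip rotations, with the determinant fixing the last sign --- to be the main obstacle.

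Finally I would check that these inequalities, together with the cube bounds $|\eta^\mathcal{M}_k|\le1$ (valid since a $p$-causal map is positive, so its SSV lie in $\mathcal{C}$ by \eqref{def:C}), cut out exactly $\mathcal{C}_p$. Concretely, $\mathcal{C}_p=\{\boldsymbol{\eta}:\boldsymbol{v}_i^\text{CP}\cdot\boldsymbol{\eta}\le3-2p,\ \boldsymbol{v}_j^\text{CcP}\cdot\boldsymbol{\eta}\le1+2p,\ |\eta_k|\le1\}$; for $p\neq\tfrac12$ this is a truncated tetrahedron whose eight facet normals are the $\boldsymbol{v}_i^\text{CP}$ and $\boldsymbol{v}_j^\text{CcP}$ (the cube bounds being redundant), while at $p=\tfrac12$ it is the cuboctahedron whose six square facets come from the cube bounds. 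This identification is a finite polytope computation: one verifies that each of the $16$ generators in \eqref{def:C_p} satisfies all listed inequalities and that the support numbers $3-2p$, $1+2p$, $1$ are attained, so the half-space description coincides with the convex hull. Combining the two directions proves the stated equivalence.
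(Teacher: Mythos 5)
Your ``$\Leftarrow$'' direction is essentially the paper's: you split the convex weights $p_{ij}$ of the sixteen generators into row and column marginals to get $\boldsymbol{\eta}^\mathcal{M}=(1-p)\boldsymbol{\alpha}+p\boldsymbol{\beta}$ with $\boldsymbol{\alpha}\in\mathcal{T}_\text{CP}$, $\boldsymbol{\beta}\in\mathcal{T}_\text{CcP}$, and then realize $E$ and $S$ with the \emph{same} rotations $R_1,R_2$ as $M$; that is exactly the published construction. Your ``$\Rightarrow$'' direction is a genuinely different route. The paper reduces to extremal $\mathcal{E},\mathcal{S}$ and invokes its Appendix Theorem A.1 on signed singular values of sums of matrices (which rests on Thompson's characterization \eqref{eq:diag1}--\eqref{eq:diag2} of diagonals of $R_1AR_2$), so that $\tilde{\boldsymbol{\sigma}}((1-p)E+pS)\in\Delta_{(1-p)\boldsymbol{v}_1^\text{CP}}+\Delta_{p\boldsymbol{v}_1^\text{CcP}}=\mathcal{C}_p$ directly, with no need for a half-space description of $\mathcal{C}_p$. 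You instead bound the support function of the set of attainable $\boldsymbol{\eta}^\mathcal{M}$ via the orientation-constrained Procrustes maxima $\max_{Q\in SO(3)}\tr{QT}$ and $\max_{Q\in O(3)^-}\tr{QT}$; note that your Procrustes identity is itself equivalent to the same Thompson fact (the diagonals of $SO(3)$ matrices are exactly $\mathcal{T}_\text{CP}$, those of $O(3)^-$ exactly $\mathcal{T}_\text{CcP}$), so the two proofs share their deepest ingredient. What your route buys is the explicit facet inequalities $\boldsymbol{v}_i^\text{CP}\cdot\boldsymbol{\eta}\le 3-2p$ and $\boldsymbol{v}_j^\text{CcP}\cdot\boldsymbol{\eta}\le 1+2p$, which is essentially the content of the paper's Theorem \ref{theo:causal_interval}; what it costs is the extra polytope identification discussed next.

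Two points in your final step need repair. First, the parenthetical claim that the cube bounds are redundant for $p\neq\tfrac12$ is false: for every $p\in(0,1)$ the face of $\mathcal{C}_p=(1-p)\mathcal{T}_\text{CP}+p\mathcal{T}_\text{CcP}$ in direction $\hat e_3$ is the Minkowski sum of two non-parallel edges, hence a genuine parallelogram facet; concretely, for $0<p\le\tfrac12$ the point $(1-2p,\,1-2p,\,1+2p)$ satisfies all eight tetrahedron inequalities but has $\eta_3>1$ and lies outside $\mathcal{C}_p$. This does not break your proof, because you do include $|\eta_k|\le1$ in the half-space description and you do derive it for $p$-causal maps, but the geometric picture should be corrected. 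Second, your justification of the identity between the half-space description and $\mathcal{C}_p$ --- ``each generator satisfies the inequalities and the support numbers are attained'' --- only proves $\mathcal{C}_p\subseteq P$ plus tightness; it does not prove $P\subseteq\mathcal{C}_p$. For that you must show the fourteen listed directions exhaust the facet normals of $\mathcal{C}_p$ (e.g.\ via the normal fan of the Minkowski sum: its rays are the rays of the two tetrahedral fans, namely the $\boldsymbol{v}_j^\text{CcP}$ and $\boldsymbol{v}_i^\text{CP}$, together with the intersections of their two-dimensional cones, which by inspection of the planes $x_i=\pm x_j$ give only $\pm\hat e_k$), or enumerate the vertices of $P$ and exhibit each as a point of $\mathcal{C}_p$. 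With those two repairs the argument is complete.
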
 
\begin{proof}
"$\Leftarrow$": From \eqref{def:C_p} we see that \begin{align*}
\boldsymbol{\eta}^\mathcal{M} \in \mathcal{C}_p \Leftrightarrow \exists \left(p_{ij} \geq 0, \;\sum_{i,j=1}^4 p_{ij} =1\right): \boldsymbol{\eta}^\mathcal{M}=\sum_{i,j=1}^4 p_{ij} \left((1-p) \boldsymbol{v}_i^\text{CP} + p \boldsymbol{v}_j^\text{CcP}\right).
\end{align*}
Now define $q_i \equiv \sum_j p_{ij}$ and $r_j \equiv \sum_i p_{ij}$. Clearly $q_i, r_j \geq 0$ and $\sum_i q_i = \sum_j r_j = 1$. %which both are non-negative and sum to one. 
We can then write
\begin{align*}
\boldsymbol{\eta}^\mathcal{M} = (1-p) \sum_i q_i \boldsymbol{v}_i^\text{CP} + p \sum_j r_j \boldsymbol{v}_j^\text{CcP} = (1-p)\, \boldsymbol{\eta}^\mathcal{E} + p \,\boldsymbol{\eta}^\mathcal{S},
\end{align*}
with $\boldsymbol{\eta}^\mathcal{E} \equiv \sum_i q_i 
\boldsymbol{v}_i^\text{CP} \in \mathcal{T}_\text{CP} $ and $
\boldsymbol{\eta}^\mathcal{S} \equiv \sum_j r_j %\q S not M in superscript 
\boldsymbol{v}_j^\text{CcP} \in \mathcal{T}_\text{CcP}$. We herewith
explicitly constructed a $p$-decomposition of $\mathcal{M}$ where the
correlation matrices of $\mathcal{E}$ and $\mathcal{S}$ have their
SSV-decomposition involving the same rotations as the
SSV-decomposition of the 
correlation matrix of $\mathcal{M}$.\\

"$\Rightarrow$": Let $p$ be fixed. Suppose that $\mathcal{E}$ and
$\mathcal{S}$ are both extremal maps, 
i.e.~%their 
%associated SSV 
$\boldsymbol{\eta}^\mathcal{E}$ and $\boldsymbol{\eta}^\mathcal{S}$ are given by one
of the vertices defined in  \eqref{eq:CP_vertices} and
\eqref{eq:CcP_vertices}, respectively, and without loss of generality we assume that
these are $\boldsymbol{v}_1^\text{CP}$ and
$\boldsymbol{v}_1^\text{CcP}$ (this is justified as taking another
vertex leads to the same result). Define $A = (1-p) E$ and $B= p S$, where $A$ has SSV
$(1-p,1-p,1-p)$ and B has SSV $(-p,-p,-p)$. In the Appendix 
we prove theorem \ref{theo:SSV} that  
restricts the possible SSV of $A+B$.  
For our case it gives 
\begin{align*}
SSV(M) \in \mathcal{C}_p.
\end{align*}
Now suppose $\mathcal{E}$ and $\mathcal{S}$ are not extremal
maps. Since the SSV of those are simply convex combinations of the SSV
of the extremal maps, it follows that also for such maps the signed
singular values of $M$ lie within $\mathcal{C}_p$.
\end{proof} \\

 \begin{figure}
\begin{subfigure}{0.4\textwidth}
\centering
\includegraphics[scale=0.4]{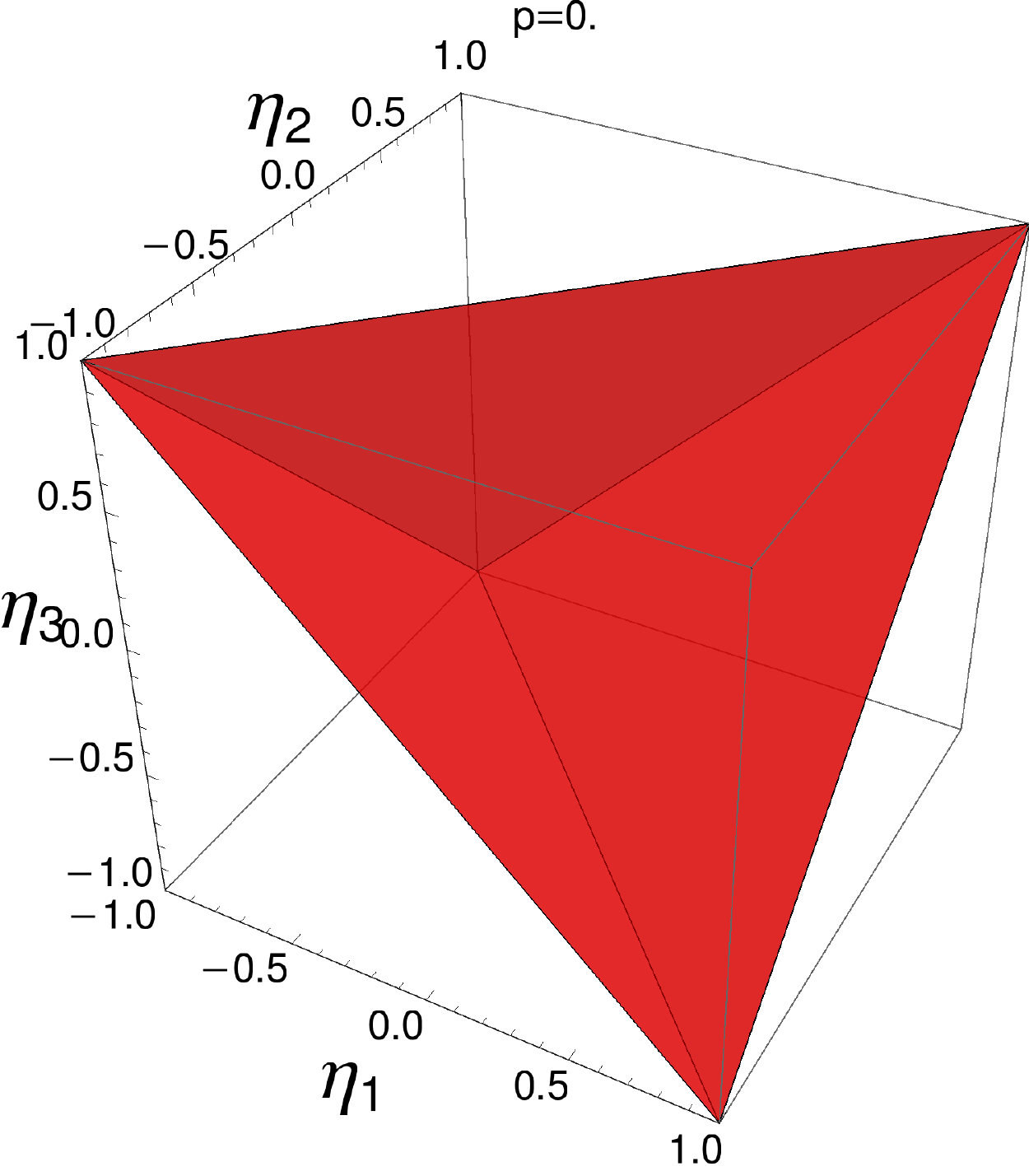}
\end{subfigure}
\begin{subfigure}{0.4\textwidth}
\centering
\includegraphics[scale=0.4]{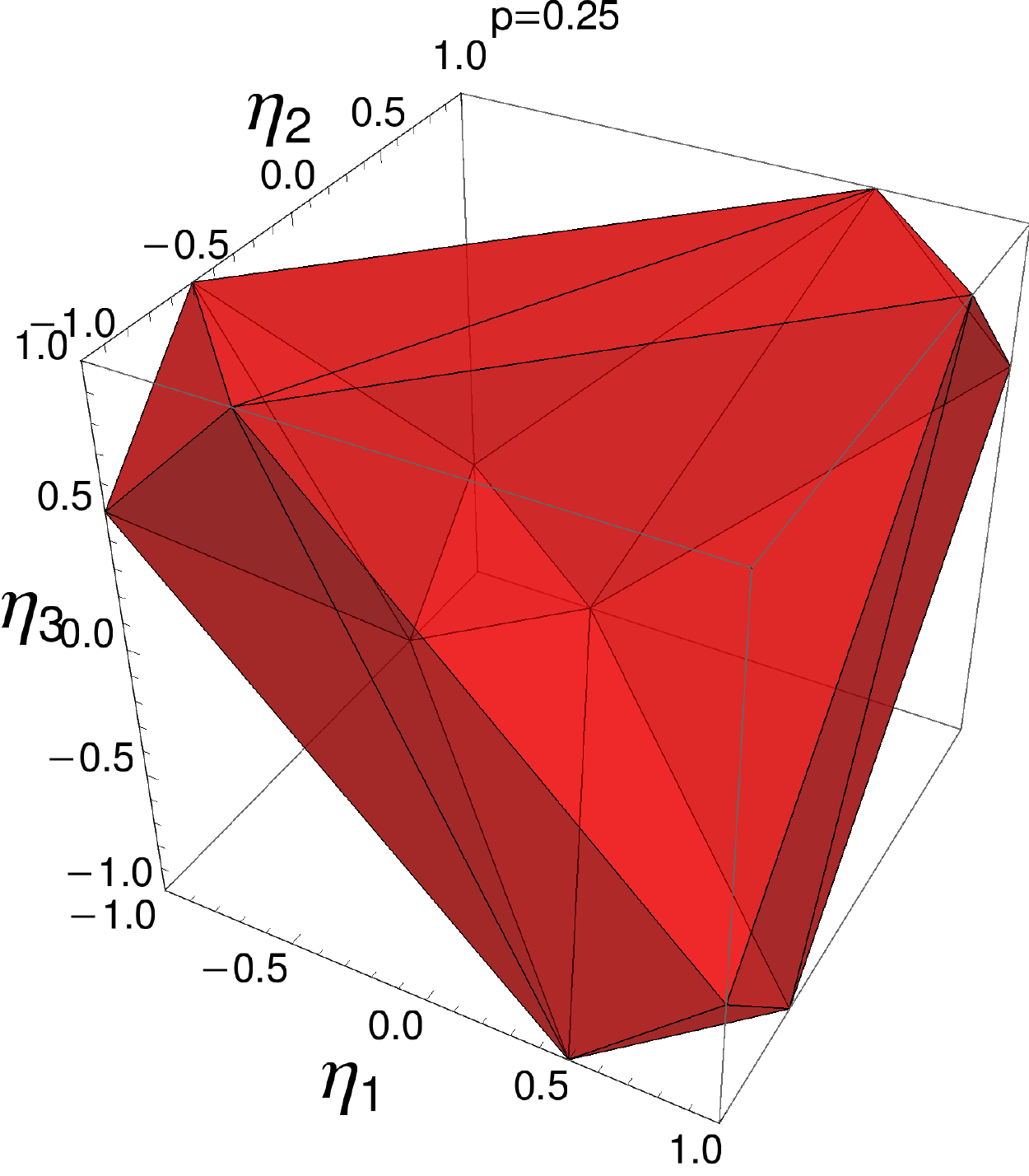}
\end{subfigure}
\begin{subfigure}{0.4\textwidth}
\centering
\includegraphics[scale=0.4]{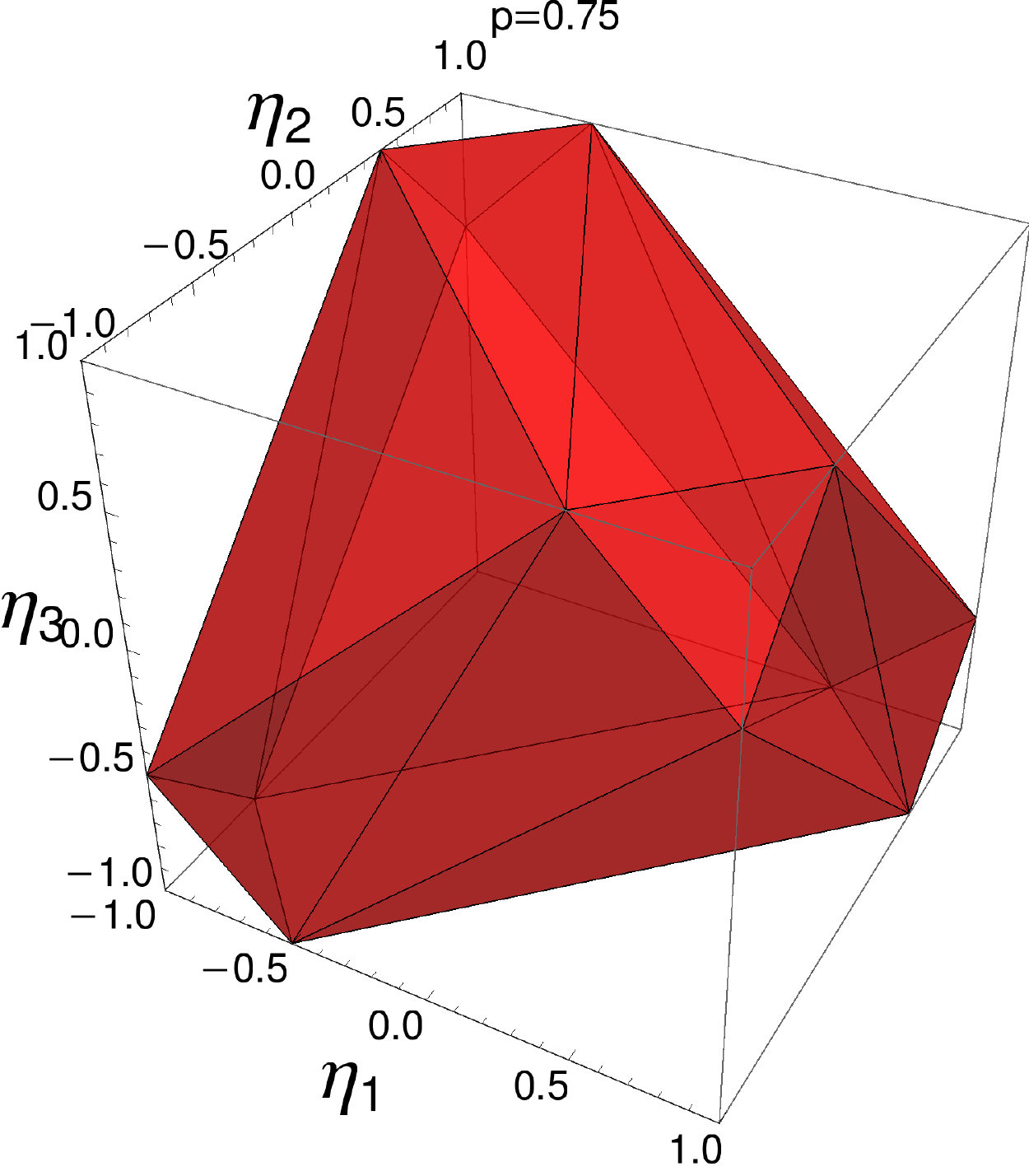}
\end{subfigure}
\begin{subfigure}{0.4\textwidth}
\centering
\includegraphics[scale=0.4]{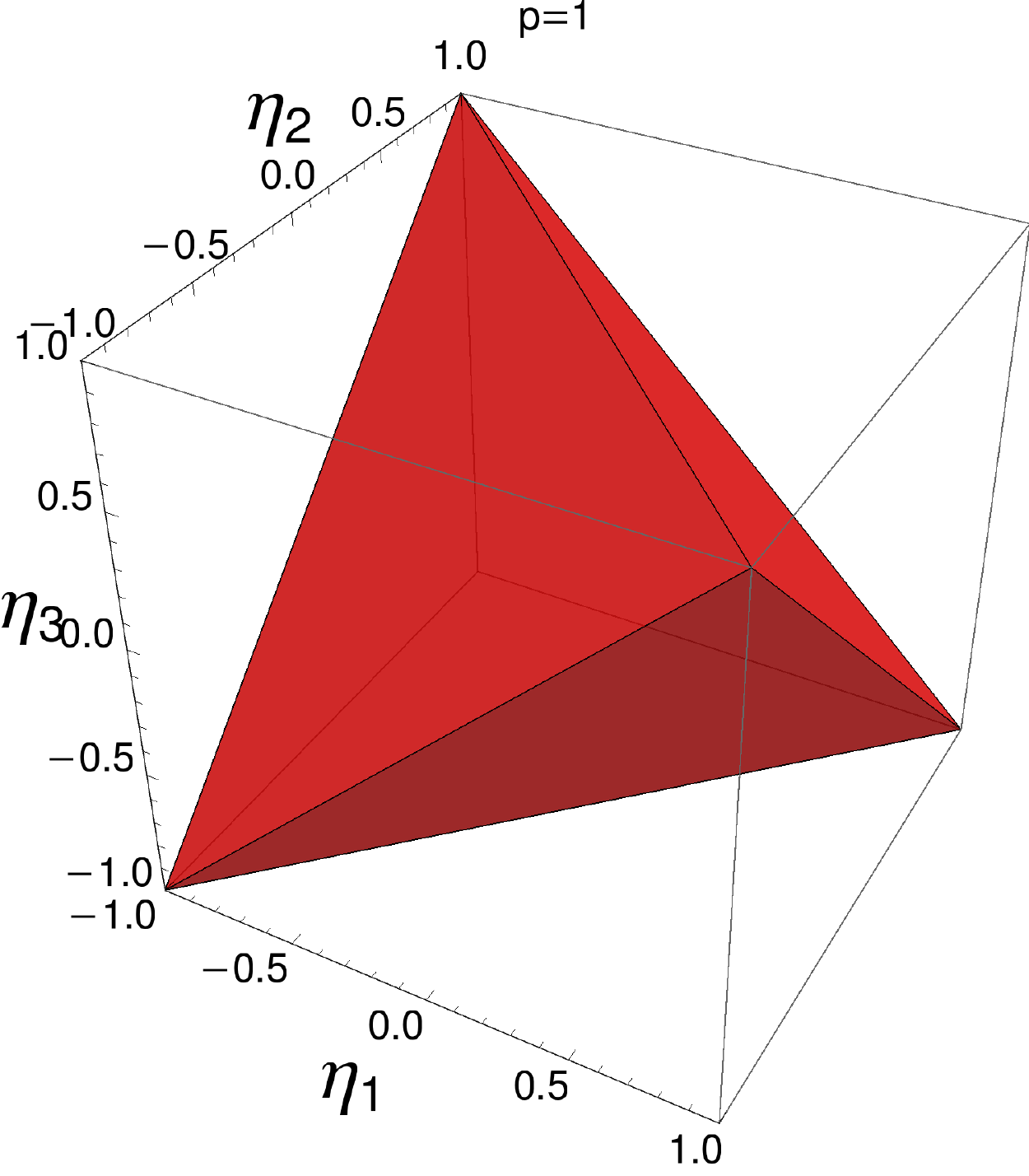}
\end{subfigure}
\caption{\textbf{Signed singular values of $p$-causal maps}: Set of
  attainable vectors of 
  signed singular values 
  associated with $\mathcal{M}$ in
  \eqref{eq:decomposition} for
  different values of $p$. 
By theorem \ref{theo:p-causal}, 
 for fixed $p$ there exists a CPTP map $\mathcal{E}$ and a CcPTP map
 $\mathcal{S}$ such that $\mathcal{M}$ is given by
 \eqref{eq:decomposition} if and only if the vector of signed singular
 values $\boldsymbol{\eta}^\mathcal{M}$ of 
 the correlation matrix of $\mathcal{M}$ 
  is in $\mathcal{C}_p$ defined in \eqref{def:C_p}.}
\label{Fig:SSV_mixture}
\end{figure}
We have seen that for a given value of $p$ the allowed SSV 
associated with a positive map $\mathcal{M}$ that is $p$-causal lie within $\mathcal{C}_p$ given in \eqref{def:C_p}.  We now turn the task around and go back to the causal inference scenario. 
Given a positive map $\mathcal{M}$ we want to tell if we can bound the
causality parameter $p$. We will do this based on the following
definition: 
\begin{defn}\textbf{Causal interval $I_\mathcal{M}$}:\\
For a given positive unital qubit map $\mathcal{M}$ we define the
interval of possible causal explanations (for short: the causal 
interval) $I_\mathcal{M}$, such that $\mathcal{M}$ is $p$-causal if
and only if $p\in I_\mathcal{M}$. 
\end{defn}
Since every qubit map is decomposable \cite[p.258]{Bengtsson2006} %it follows, that 
the causal interval is always non empty, $I_\mathcal{M} \neq \emptyset$.

\begin{theo}\label{theo:causal_interval}
Let $\mathcal{M}$ be a positive unital qubit map, with 
associated signed singular values $\boldsymbol{\eta}^\mathcal{M}$ (we assume $\eta^\mathcal{M}_i \geq 0$ for $i=1,2$). Then the causal interval of $\mathcal{M}$ is given by
\begin{align}
p_\text{max} &= \min\left(\frac{3 - \boldsymbol{\eta}^\mathcal{M}\cdot \boldsymbol{v}_1^\text{CP}}{2}, 1\right) \label{eq:pmax},\\
p_\text{min} &= \max\left(\frac{\boldsymbol{\eta}^\mathcal{M}\cdot \boldsymbol{v}_4^\text{CcP}-1}{2},0\right),\label{eq:pmin}
\end{align}
with $\boldsymbol{v}_1^\text{CP} = (1,1,1)^T$ ($\boldsymbol{v}_4^\text{CcP}  = (1,1,-1)^T$) defining a vertex of the CPTP (CcPTP) tetrahedron $\mathcal{T}_{CP}$ ($\mathcal{T}_{CcP}$).
\end{theo}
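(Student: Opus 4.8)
The plan is to reduce the whole statement to the convex geometry already set up in Theorem~\ref{theo:p-causal}, which asserts that $\mathcal{M}$ is $p$-causal if and only if $\boldsymbol{\eta}^\mathcal{M}\in\mathcal{C}_p$. The key observation is that the polytope in \eqref{def:C_p} is nothing but the (scaled) Minkowski sum $\mathcal{C}_p=(1-p)\mathcal{T}_\text{CP}+p\mathcal{T}_\text{CcP}$, since the convex hull of all pairwise sums $(1-p)\boldsymbol{v}_i^\text{CP}+p\boldsymbol{v}_j^\text{CcP}$ equals the Minkowski sum of the two scaled tetrahedra. Thus the causal interval becomes $I_\mathcal{M}=\{p\in[0,1]:\boldsymbol{\eta}^\mathcal{M}\in\mathcal{C}_p\}$, and the entire problem is to decide for which $p$ a fixed vector lies in this deforming polytope. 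I would therefore work entirely with the half-space (facet) description of $\mathcal{C}_p$, for which the $p$-dependence becomes explicit.

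To get that description I would use the support function, which is additive under Minkowski sums and homogeneous under scaling, so that $h_{\mathcal{C}_p}(\boldsymbol{u})=(1-p)\,h_{\mathcal{T}_\text{CP}}(\boldsymbol{u})+p\,h_{\mathcal{T}_\text{CcP}}(\boldsymbol{u})$, with $h_P(\boldsymbol{u})=\max_{\boldsymbol{v}}\boldsymbol{u}\cdot\boldsymbol{v}$ taken over the vertices $\boldsymbol{v}$ of $P$. The two tetrahedra are dual in the precise sense that the outward facet normals of $\mathcal{T}_\text{CP}$ are exactly the vertices $\boldsymbol{v}_j^\text{CcP}$ and vice versa: reading the Fujiwara--Algoet conditions \eqref{eq:Fujiwara} as $\boldsymbol{v}_j^\text{CcP}\cdot\boldsymbol{x}\le1$ gives the $H$-form of $\mathcal{T}_\text{CP}$, and likewise $\boldsymbol{v}_i^\text{CP}\cdot\boldsymbol{x}\le1$ for $\mathcal{T}_\text{CcP}$; moreover $\mathcal{T}_\text{CcP}=-\mathcal{T}_\text{CP}$. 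Evaluating $h_{\mathcal{C}_p}$ at the candidate normals then yields three families of facet inequalities: the coordinate directions $\pm\hat{e}_i$ give the $p$-independent cube bounds $|\eta_i^\mathcal{M}|\le1$; the directions $\boldsymbol{v}_i^\text{CP}$ give $\boldsymbol{v}_i^\text{CP}\cdot\boldsymbol{\eta}^\mathcal{M}\le 3-2p$; and the directions $\boldsymbol{v}_j^\text{CcP}$ give $\boldsymbol{v}_j^\text{CcP}\cdot\boldsymbol{\eta}^\mathcal{M}\le 1+2p$. I would then verify that these $14$ normals exhaust the facets of the Minkowski sum, checking that the only genuinely new (edge--edge) normals are the coordinate axes, since the cross product of two non-parallel edge directions of the tetrahedra returns either a coordinate axis or a cube-diagonal direction already in the list.

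With the complete $H$-description, the conclusion is almost immediate. The cube bounds hold automatically because $\boldsymbol{\eta}^\mathcal{M}$ is the SSV vector of a positive map, i.e.\ $\boldsymbol{\eta}^\mathcal{M}\in\mathcal{C}$. The remaining families are linear in $p$: the $\boldsymbol{v}_i^\text{CP}$ inequalities read $p\le(3-\boldsymbol{v}_i^\text{CP}\cdot\boldsymbol{\eta}^\mathcal{M})/2$ and the $\boldsymbol{v}_j^\text{CcP}$ inequalities read $p\ge(\boldsymbol{v}_j^\text{CcP}\cdot\boldsymbol{\eta}^\mathcal{M}-1)/2$, so $\boldsymbol{\eta}^\mathcal{M}\in\mathcal{C}_p$ holds exactly on the interval with $p_\text{max}=\min\big(1,\min_i(3-\boldsymbol{v}_i^\text{CP}\cdot\boldsymbol{\eta}^\mathcal{M})/2\big)$ and $p_\text{min}=\max\big(0,\max_j(\boldsymbol{v}_j^\text{CcP}\cdot\boldsymbol{\eta}^\mathcal{M}-1)/2\big)$; in particular $I_\mathcal{M}$ is automatically a (nonempty) interval. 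The last step is to collapse the $\min_i$ and $\max_j$ to the single vertices stated in the theorem: under the normalization $\eta_1^\mathcal{M},\eta_2^\mathcal{M}\ge0$, with $\eta_3^\mathcal{M}$ taken as the component of smallest modulus (the residual gauge freedom of signed singular values being permutation and pairwise sign flips), one shows $\boldsymbol{v}_1^\text{CP}=(1,1,1)^T$ maximizes $\boldsymbol{v}_i^\text{CP}\cdot\boldsymbol{\eta}^\mathcal{M}$ and $\boldsymbol{v}_4^\text{CcP}=(1,1,-1)^T$ maximizes $\boldsymbol{v}_j^\text{CcP}\cdot\boldsymbol{\eta}^\mathcal{M}$. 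Comparing $\boldsymbol{v}_1^\text{CP}$ with the other CP vertices reduces to $\eta_3^\mathcal{M}\ge-\eta_1^\mathcal{M}$ and $\eta_3^\mathcal{M}\ge-\eta_2^\mathcal{M}$, and comparing $\boldsymbol{v}_4^\text{CcP}$ with the other CcP vertices reduces to $\eta_3^\mathcal{M}\le\eta_1^\mathcal{M}$ and $\eta_3^\mathcal{M}\le\eta_2^\mathcal{M}$; both pairs hold precisely under that normalization. Substituting $\boldsymbol{v}_1^\text{CP}\cdot\boldsymbol{\eta}^\mathcal{M}=\eta_1^\mathcal{M}+\eta_2^\mathcal{M}+\eta_3^\mathcal{M}$ and $\boldsymbol{v}_4^\text{CcP}\cdot\boldsymbol{\eta}^\mathcal{M}=\eta_1^\mathcal{M}+\eta_2^\mathcal{M}-\eta_3^\mathcal{M}$ then produces \eqref{eq:pmax} and \eqref{eq:pmin}.

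I expect the genuine obstacles to be twofold. First, proving that the coordinate-axis directions really do complete the facet list of the Minkowski sum is what upgrades the two explicit inequalities from merely \emph{necessary} to \emph{sufficient} for membership in $\mathcal{C}_p$; without it one only gets $I_\mathcal{M}\subseteq[p_\text{min},p_\text{max}]$. Second, the vertex-selection step is more delicate than it looks: the bare hypothesis $\eta_1^\mathcal{M},\eta_2^\mathcal{M}\ge0$ does not by itself fix which CP- and CcP-vertices are extremal, and one must additionally control the modulus of $\eta_3^\mathcal{M}$ (taking it smallest) for $\boldsymbol{v}_1^\text{CP}$ and $\boldsymbol{v}_4^\text{CcP}$ to be the maximizers, as a quick test on the degenerate SSV multiset $\{1,1,0\}$ makes clear.
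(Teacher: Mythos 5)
Your proof is correct, and it reaches the same geometric heart as the paper's argument --- that the binding constraint for $p_\text{max}$ is the facet of $\mathcal{C}_p$ with outward normal $\boldsymbol{v}_1^\text{CP}$ at offset $3-2p$ --- but by a more systematic route. The paper simply asserts, with reference to a sketch, that $\boldsymbol{\eta}^\mathcal{M}$ must sit on that particular ``critical facet'' of $\mathcal{C}_{p_\text{max}}$ and then reads off the offset by projecting onto $\boldsymbol{v}_1^\text{CP}$; you instead derive the full half-space description of $\mathcal{C}_p=(1-p)\mathcal{T}_\text{CP}+p\mathcal{T}_\text{CcP}$ from the additivity of support functions, using the duality between the two tetrahedra (the facet normals of each are the vertices of the other) and checking that the only edge--edge normals of the Minkowski sum are the coordinate axes. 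That extra work is exactly what the paper leaves implicit: it is what shows the two displayed inequalities, together with the cube bounds, are \emph{sufficient} for membership in $\mathcal{C}_p$, so that $I_\mathcal{M}$ really is the full interval $[p_\text{min},p_\text{max}]$ rather than merely contained in it, and it also makes the interval structure of $I_\mathcal{M}$ manifest (the paper tacitly assumes it). You are also right to flag the vertex-selection subtlety: the bare hypothesis $\eta_1^\mathcal{M},\eta_2^\mathcal{M}\ge0$ does not single out $\boldsymbol{v}_1^\text{CP}$ and $\boldsymbol{v}_4^\text{CcP}$ as the maximizers (e.g.\ $(1,0.5,-0.8)$ gives the wrong $p_\text{max}$), and one must additionally take $|\eta_3^\mathcal{M}|$ smallest --- a condition automatically met by the canonical SSV ordering of the appendix but not stated in the theorem. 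The only loose end on your side is the parenthetical ``nonempty'': that $p_\text{min}\le p_\text{max}$ is not automatic from the facet inequalities alone but follows from a short check ($\boldsymbol{v}_i^\text{CP}\cdot\boldsymbol{\eta}^\mathcal{M}+\boldsymbol{v}_j^\text{CcP}\cdot\boldsymbol{\eta}^\mathcal{M}\le4$ using $|\eta_k^\mathcal{M}|\le1$), or alternatively from the St{\o}rmer--Woronowicz decomposability the paper invokes.
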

Note that the assumption $\eta^\mathcal{M}_i \geq 0$ for $i=1,2$ can
always be met, 
using the unitary freedom in the decomposition in the right way. \\
\begin{proof}
We show the theorem for $p_\text{max}$, the determination of
$p_\text{min}$ can be treated in an analogue way. \\ 
First we check if $\mathcal{M}$ is a CcPTP map, by checking if 
%the associated SSV lie in the corresponding tetrahedron. 
$\boldsymbol{\eta}^\mathcal{M} \in \mathcal{T}_\text{CcP}$.
If it is
CcPTP then $p_\text{max}=1$, trivially.\\ 
\begin{figure}
\centering
\includegraphics[scale=0.5]{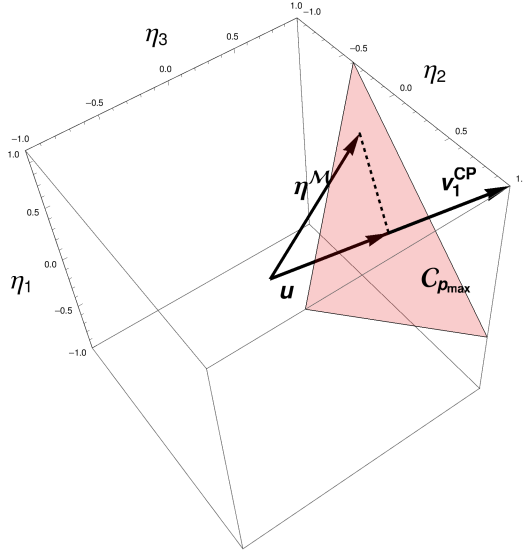}
\caption{\textbf{Sketch for proof of theorem
    \ref{theo:causal_interval}}: The value of $p_\text{max}$ is
  determined through the projection of $\boldsymbol{\eta}^\mathcal{M}$
  onto $\boldsymbol{v}_1^\text{CP}$, which is given by
  $\boldsymbol{u}$. The red triangle is one of the facets of
  $\mathcal{C}_{p_\text{max}}$.} 
\label{Fig:Sketch}
\end{figure}
Now suppose it is not CcPTP. $p_\text{max}$ is then given such that
$\boldsymbol{\eta}^\mathcal{M} \in \mathcal{C}_{p_\text{max}}$ but
$\boldsymbol{\eta}^\mathcal{M} \notin \mathcal{C}_{p'}$ with $p' \in
(p_\text{max},1]$.
This implies that $\boldsymbol{\eta}^\mathcal{M}$
lies on the surface of $\mathcal{C}_{p_\text{max}}$. Since we assumed
$\eta^\mathcal{M}_i \geq 0$ for $i=1,2$,  the critical facet of
$\mathcal{C}_{p_\text{max}}$ is the one which is perpendicular to
$\boldsymbol{v}^\text{CP}_1$ and has the vertices
$(1,1,1-2p_\text{max})^T,(1,1-2p_\text{max},1)^T,
(1-2p_\text{max},1,1)^T$ (see FIG.\ref{Fig:Sketch}). Since this facet
is perpendicular to $\boldsymbol{v}^\text{CP}_1$,
$\boldsymbol{\eta}^\mathcal{M}$ lies on this facet if its projection
onto $\boldsymbol{v}_1^\text{CP}$ equals the vector pointing from the
origin to the intersection of the facet and
$\boldsymbol{v}^\text{CP}_1$, given as
$\boldsymbol{u}\equiv(1-(2/3)\,p_\text{max})\boldsymbol{v}^\text{CP}_1$,
see Fig.\ref{Fig:Sketch}. Hence we get the following equation  
\begin{align}
&\frac{1}{3}\boldsymbol{v}^\text{CP}_1 (\boldsymbol{v}^\text{CP}_1
  \cdot \boldsymbol{\eta}^\mathcal{M}) \overset{!}{=} \boldsymbol{u} 
\\
&\Leftrightarrow \boldsymbol{v}^\text{CP}_1 \cdot \boldsymbol{\eta}^\mathcal{M} = 3 - 2p_\text{max}\\
&\Leftrightarrow p_\text{max} = \frac{3 - \boldsymbol{\eta}^\mathcal{M}\cdot \boldsymbol{v}^\text{CP}_1}{2}. 
\end{align}

\end{proof}
\subsection{Extremal cases}\label{sec:extremal} 
In the previous section we found the general form of the causal
interval $I_\mathcal{M}$ for an observed map $\mathcal{M}$. We 
now analyze the extremal cases where the interval reduces to a single value or on the other hand the interval is given as $I_\mathcal{M}=[0,1]$.\\
As already noted in \cite[Table 1.]{Ried2015} there are extremal cases
that allow for a complete solution of the problem even without any
additional constraints. This is the case if %the SSV of the 
%correlation matrix corresponding to the map
%$\mathcal{M}$ lie on 
$\boldsymbol{\eta}^\mathcal{M}$ equals one of the vertices of the cube of positive maps, see
Fig. \ref{Fig:SSV}. The solution is then either $p=0$ (pure
cause-effect) if the SSV are all positive or exactly two are negative
or $p=1$ (pure common-cause) if the SSV are all negative or exactly
one positive. The exact reconstruction of $\mathcal{E}$ or
$\mathcal{S}$ in this cases is trivial.\\ 
Interestingly, with theorem \ref{theo:causal_interval} we can show that
\textit{every} point on the edges of the cube 
$\mathcal{C}$ defined in \eqref{def:C} gives us a unique solution
without additional constraints: \\
\begin{proof}
Let $\mathcal{M}$ be a  positive map and $M$ be the corresponding
correlation matrix with $M = R_1 \eta^\mathcal{M} R_2$ where
$\eta^\mathcal{M} = \diag (\boldsymbol{\eta}^\mathcal{M})$ with the
signed singular values $\boldsymbol{\eta}^\mathcal{M} = (1,1,1-2p)^T,
\; p \in [0,1]$, and two rotations $R_1,R_2 \in SO(3)$. Due to the freedom in $R_1$ and $R_2$ this
describes all maps with corresponding vector of SSV on one of the edges of the cube $\mathcal{C}$ defined in
\eqref{def:C}. According to theorem \ref{theo:causal_interval} we find  
\begin{align}
p_\text{max} &= \min\left(\frac{3 - \boldsymbol{\eta}^\mathcal{M}\cdot
               \boldsymbol{v}_1}{2}, 1\right) = \frac{3- \left(2+(1-2p)\right)}{2}
               = p,\\ 
p_\text{min} &= \max\left(\frac{\boldsymbol{\eta}^\mathcal{M}\cdot
               \boldsymbol{v}_4^\text{CcP}-1}{2},0\right)=
               \frac{2-(1-2p)-1}{2} = p. 
\end{align}
By theorem \ref{theo:SSV} it follows, that the maps $\mathcal{E}$ and
$\mathcal{S}$ in the decomposition \eqref{eq:decomposition}
necessarily correspond to extremal points in $\mathcal{T}_\text{CP}$
and $\mathcal{T}_\text{CcP}$ defined in \eqref{def:T_CP} and
\eqref{def:T_CcP} (unitary channel and maximally entangled state). It
is then obvious that \begin{align} 
 M = R_1\left((1-p)\diag(\boldsymbol{v}^\text{CP}_1) + p\,\diag(
                       \boldsymbol{v}^\text{CcP}_4)\right) R_2 
\end{align}
is the only possible solution.
\end{proof}

In the other extreme case, 
 if the map $\mathcal{M}$ is superpositive,
i.e.~CP and CcP (see Figure \ref{Fig:SSV}), it could be explained by a
pure CPTP, a pure CcPTP map, or any convex 
combination of those
two. Therefore 
one cannot give any restrictions of possible values of
$p$  \cite[III.E of supplementary information]{Ried2015}.\\ 
\begin{proof}
Let $\mathcal{M}$ be a superpositive map. 
There exists a SSV decomposition 
of its correlation matrix for which $\boldsymbol{\eta}^\mathcal{M} \in
\mathcal{O}_\text{SP}$, defined in \eqref{eq:octahedron}, and for which
$\eta^\mathcal{M}_i \geq 0$ for $i=1,2$. Hence we can write
$\boldsymbol{\eta}^\mathcal{M} = p_1 \hat{e}_x + p_2 \hat{e}_y + p_3
\hat{e}_z + p_4 (-\hat{e}_z)$, with $\sum_i p_i =1$. The scalar
product of each component of $\boldsymbol{\eta}^\mathcal{M} $ with
$\boldsymbol{v}^\text{CP}_1 = (1,1,1)^T$ is upper bounded by 1. Hence
we have $\boldsymbol{\eta}^\mathcal{M} \cdot
\boldsymbol{v}^\text{CP}_1 \leq 1$ and with that eq. \eqref{eq:pmax}
evaluates to $p_\text{max} = 1$. Analogously one finds $p_\text{min}
=0$. 
\end{proof}

%--------- recast of the subsection "complete solution given additional constraints  ------
\subsection{Additional assumptions / Causal inference with constrained classical correlations} \label{sec:AdditionalConst}
So far we only assumed that our data is generated by a unital channel
and a unital state (a state whose local partitions are completely
mixed). We have seen that in some extreme cases a unique solution to
the problem can be found. Ried et al.~showed that one
can always find a unique solution for $p$ 
if one restricts the channel
to unitary channels and the bipartite states to maximally entangled
pure states \cite{Ried2015}. Furthermore, it is then possible to reconstruct the
channel and the state up to binary ambiguity, meaning there are two
explanations leading to the same observed correlations. 
The ellipsoids 
associated with unitary channels and maximally entangled states  are
spheres with unit radius and the SSV 
of 
their correlation matrices correspond to the vertices of
$\mathcal{T}_\text{CP}$ and $\mathcal{T}_\text{CcP}$ respectively . \\ 
In the following we investigate this scenario again, but add a
known amount of noise in the channel or in the bipartite
state. For the channel this is done by mixing the unitary evolution
with a completely depolarizing channel \cite{Nielsen2009}. The
completely depolarizing channel maps every Bloch-vector to the origin,
$\rho \mapsto \frac{\mathbb{1}}{2}$  and hence is represented by the
zero matrix. The 
ellipsoid associated with the
 mixture of a completely depolarizing channel with a
unitary channel thus results in a shrinked sphere. 
For strong enough noise the result eventually becomes an entanglement
breaking channel, which only produces ``classical'' correlations
\cite{Ruskai2003}. Due to the unitary freedom compared to standard
depolarizing channels, 
we call these channels \textit{generalized depolarizing
  channel}. For the state we mix a pure maximally entangled state with
the completely mixed state, whose correlation matrix is given by the zero-matrix. %(whose steering map ellipsoid is also the
%zero matrix). 
We call the state a \textit{generalized Werner state}, in the
sense that instead of a convex combination of a singlet and a
completely mixed state \cite{Werner1989} we allow the convex
combination of an arbitrary maximally entangled state with the
completely mixed state. States at a certain threshold of noise become
separable and the correlations become ``classical'' \cite{Jevtic}. We
will then see that even when confronted with purely classical
correlations, if we have enough a-priori-knowledge about the data
generation, \textit{i.e.}~we know the amount of noise, we can still
find a solution analogous to \cite{Ried2015}, in the sense of
determining uniquely the parameter $p$,    
and the channel and the state up to binary
ambiguity\footnote{Strictly speaking, only for $p\neq 1/2$ one can always
determine the unitary and the state. For $p=1/2$ there is an infinite number of
channels and states (all those where every point is diametrically
opposed for the unitary channel and the state.), for which the
ellipsoid reduces to a single point, and hence the correlation matrix is the zero matrix. The parameter $p=1/2$ can then be restored but not the unitary and the state.}.
We will first keep the unitary channel and start with a generalized
Werner-state and show how one can recreate the scenario of Ried et
al. Then we will add the noise in the channel. 
 
\subsubsection{Solution of the causal inference problem using  generalized Werner states}
The  analysis follows closely in spirit section III.D in the supplementary information in
\cite{Ried2015}.
 We start again with equation \eqref{eq:decomposition} and assume that the steering map $\mathcal{S}$ is generated by a shared generalized Werner state $\rho_{AB} = \epsilon \frac{\mathbb{1}}{4} + (1-\epsilon) \ket{\psi}\bra{\psi}$, where the parameter $\epsilon \in [0,1)$ is known and fixed in advance and $\ket{\psi}$ is an unknown maximally entangled pure state. The map $\mathcal{E}$ is generated by an unknown unitary channel $U$.\\
Since $\epsilon$ is fixed, the class of allowed
explanations 
is completely defined up to unitary freedom in the channel and in the state. Hence the number of free parameters is the same as in the case considered in \cite{Ried2015}, which coincides with the case $\epsilon=0$. For
$\epsilon > 2/3$ the state $\rho_{AB}$ becomes separable, i.e.~is not
entangled anymore, see 
\cite{Werner1989} and Fig.5 in the supplementary information of \cite{Jevtic}. But
the reconstruction works independently of $\epsilon$.  Hence, we see
here that the possibility of reconstruction hinges not on the
entanglement in $\rho_{AB}$ but on the prior knowledge we have about
$\rho_{AB}$.\\

The correlation matrix 
corresponding
to the generalized Werner-state is simply the one of a maximally
entangled state shrinked by a factor $1-\epsilon$ and will thus be
denoted $(1-\epsilon) S$, where $S$ is the 
correlation matrix
corresponding to a maximally entangled state. Thus in our scenario the 
information Alice 
and Bob obtain characterizes the matrix \begin{align}
M = p(1-\epsilon) S +(1-p) E. \label{eq:Mwerner}
\end{align}
The ellipsoid is described by the eigenvalues and -vectors of
$MM^T$. The eigenvectors correspond to the direction of the semi
axes and the squareroots of the eigenvalues are their lengths. There
is one degenerate pair and another single one. The eigenvector
corresponding to the non-degenerate semi axis is parallel to
$\boldsymbol{n}$ which is defined as the axis on which the images of
$S$ and $E$ are diametrically opposed.  
 Hence the length of this semi axis is $l_1 =|1-p -
 p(1-\epsilon)|$. Furthermore we have 
\begin{align*}
&\text{sign}(\det M) =\text{sign} (1-p - p(1-\epsilon)),%\\
%& \det M = 0 \Leftrightarrow 1-p = p(1-\epsilon), \\
%&\det M < 0 \Leftrightarrow 1-p < p(1-\epsilon).
\end{align*}
if $l_1>0$ 
and $\det M = 0$ if $l_1=0$. 
 Thus if we calculate the length of this semi axis we can already determine the causality parameter $p$ as
\begin{align}
l_1 = |1-2p +p\epsilon| \Leftrightarrow p = \frac{1\mp l_1}{2-\epsilon}, \label{eq:p2}
\end{align}
where the ambiguity is solved by considering the sign of $\det M$.\\
Now that we have $p$ and $\epsilon$ at hand we can define a 
new map with correlation matrix
\begin{align}
M' \equiv \frac{1}{1-p\epsilon} M = \frac{p(1-\epsilon)}{1-p\epsilon}S + \frac{1-p}{1-p\epsilon} E \equiv p' S + (1-p') E, \label{eq:T'}
\end{align}
where we defined
\begin{align}
&p' \equiv  \frac{p(1-\epsilon)}{1-p\epsilon},\\
& 1- p' = 1-  \frac{p(1-\epsilon)}{1-p\epsilon} = \frac{1-p\epsilon - p(1-\epsilon}{1-p\epsilon} = \frac{1-p}{1-p\epsilon}.
\end{align}
The 
properties of the ellipsoid can also be found in the SSV decomposition
of the 
correlation matrix
\begin{align}
M = R_1 DR_2, \qquad \text{with } D=\diag(\boldsymbol{\eta}^\mathcal{M}) \text{ and } \; R_1,R_2 \in SO(3).
\end{align} 
The absolute values of the entries of $\boldsymbol{\eta}^\mathcal{M}$ equal the
lengths of the semi axes of the ellipsoid and we  choose $R_1$ and
$R_2$ such that 
$\eta^\mathcal{M}_1= \eta^\mathcal{M}_2$. The axis on which the images
of $S$ and $E$ are diametrically opposed is then given by the last
column of $R_1$, i.e.~$\hat{n} = R_1 \hat{e}_3$. The length of this
axis is  $l_1 = |\eta^\mathcal{M}_3|$.\\ 
In \eqref{eq:T'} the promise is given that $S$ is the 
correlation matrix 
of a
maximally entangled state and that $E$ is the 
correlation matrix
of a unitary
channel. The reconstruction of those is extensively studied in the
supplementary information of \cite{Ried2015}.  With the method
presented there we find the value of $p'$ and can restore the
correlation matrices corresponding to $U$ and $\ket{\psi}$ up to a binary
ambiguity, and hence solve the causal inference problem. We review
this in terms of SSV and discuss where the binary ambiguity arises. \\ 
Starting from the l.h.s.~of  \eqref{eq:T'} the goal is to determine
$p', S,$ and $M$ on the r.h.s. Consider the SSV decomposition of
the  correlation matrix
\begin{align}
M' = R'_1 D'R'_2, \qquad \text{with } D'=\diag(\boldsymbol{\eta}^\mathcal{M'}) \text{ and } R_1',R_2'\in SO(3).
\end{align} 
The absolute values of the entries of $\boldsymbol{\eta}^\mathcal{M'}$ equal the lengths of the semi axes of the ellipsoid and we choose $R'_1,R'_2$ s.t.~$\eta_1^\mathcal{M'}=\eta_2^\mathcal{M'} $. The axis on which the images of $S$ and $E$ are diametrically opposed is then given by the last column of $R'_1$, i.e.~$\hat{n}' = R'_1 \hat{e}_3$. The length of this axis is $l'_1 = |\eta_3^\mathcal{M'}|$. %Since the other two SSV of $T'_m$ are degenerate
However, the direction of $\hat{n}'$, depending on the choice of
$R'_1$ and $R'_2$, cannot  be determined uniquely and  allows two
possible solutions $\pm\hat{n}'$. The  parameter $p'$ is determined by
the length $l_1'$ and can be calculated as 
\begin{align}
p' = \frac{1 - \left(\text{sign}\det(M') \right)l_1'}{2},
\end{align}
and if $\det(M') = 0$ we have $p' = 1/2$. If $p'=0$ or $p'=1$ the reconstruction is trivial (of course in these cases one cannot reconstruct $S$ or $E$, respectively). If $p'\in (0,1)$, we can define \cite{Ried2015}  
\begin{align}
r' &= |\eta_1^\mathcal{M'}|,\\
\gamma' &= 2 \arcsin\left( \sqrt{\frac{1-r'^2}{4(p'-{p'}^2)}}\right),\\
\gamma_2' &= \arccos\left(\frac{1+r'^2- \left[2p'\sin\frac{\gamma'}{2}\right]}{2r'}\right). 
\end{align}
The reconstruction of the 
correlation matrices 
$S$ and $E$ can then be done, c.f. eq. (58) and (59) in the supplementary information of \cite{Ried2015}:
\begin{align}
E &= R_{\hat{n}', \gamma'_2} S_{\perp\hat{n}', 1/r'} S_{\hat{n}', 1/(1-2p')} M', \label{eq:ERec}\\
S &= R_{\hat{n}', -\gamma'+\gamma'_2} S_{\perp\hat{n}', 1/r'} S_{\hat{n}', 1/(2p'-1)} M', \label{eq:SRec}
\end{align}
where $R_{\hat{n}, \alpha}$ indicates a rotation about axis $\hat{n}$ with rotation angle $\alpha$, $ S_{\hat{n}', 1/(1-2p')}$ a scaling
along $\hat{n}'$ by a factor $1/(1-2p')$ and $S_{\perp\hat{n}', 1/r'}
$ a scaling of the plane perpendicular to $\hat{n}'$ by a factor
$1/r'$.
From \eqref{eq:ERec} and \eqref{eq:SRec} we see that a
reconstruction of $E$ and $S$ is not possible if $p'=1/2$. \\
Let us
summarize what we can infer 
about the causation of $M$ given in \eqref{eq:Mwerner}:
\begin{itemize}
\item 
The causality parameter $p$ can be determined uniquely in all cases, see eq.\eqref{eq:p2}.
\item If $r'=0$ or $p'=1/2$ then $S$ and $E$ cannot be determined,
\item else we can determine two sets of solutions for $E$ and $S$ given by \eqref{eq:ERec} and \eqref{eq:SRec}, distinguished by the choice of direction of $\hat{n}'$.
\end{itemize}
On the other hand, if we do not have prior knowledge of $\epsilon$, then in general we cannot determine $p$ with \eqref{eq:p2}. This ambiguity can easily be illustrated by looking at an example:\\
Take 
$U = \sigma_x$ and $\ket{\psi} = \frac{\ket{00} -\ket{11}}{\sqrt{2}}$. We then have:
\begin{align*}
E = \text{diag}(1,-1,-1), \qquad S = \text {diag}(-1,1,1).
\end{align*}
Combining this for arbitrary $\epsilon$ and $p$ gives 
\begin{align*}
M = \text{diag}(1-p\epsilon, -(1-p\epsilon), -(1-p\epsilon)).
\end{align*}
Hence for all values of the parameters where $p\epsilon = \text{cons.}$, the measurement statistics for Alice and Bob are exactly the same and there is no way to distinguish different pairs of values.\\

Analogously to using a generalized Werner state for the steering map, we can also use a generalized depolarizing channel. Then, with prior knowledge of the amount of noise, we can still find a complete solution even though the resulting channel might be entanglement breaking.

\subsubsection{Generalized depolarizing channel and generalized Werner state }
We shall now consider the case where both the channel as well as the
state are mixed with a known amount of noise. Therefore we take $S'
= (1-\epsilon_c) S$ 
for a generalized Werner state (thus $S$ 
corresponds again to a rotated and inverted Bloch-sphere) and $E' =
(1-\epsilon_e) E$ 
for a generalized depolarizing channel. We again assume $\epsilon_e\,
\in (0,1)$ and $\epsilon_c\, \in (0,1)$ to be known. We then have 
\begin{align}
M = (1-p)(1-\epsilon_e) E + p (1-\epsilon_c) S.
\end{align}
The reconstruction works as follows. Without loss of generality we assume $\epsilon_e \leq \epsilon_c$ (in the other case we just have to make the reconstruction discussed in the previous subsection for the entanglement breaking channel and not for the Werner-state). The only thing we have to do is to divide by $(1-\epsilon_e)$ to restore the problem of the previous section
\begin{align*}
M' = \frac{M}{1-\epsilon_e} = (1-p) E + p \frac{1-\epsilon_c}{1-\epsilon_e} S \equiv  (1-p) E + p (1-\epsilon) S,
\end{align*}
with $1-\epsilon \equiv \frac{1-\epsilon_c}{1-\epsilon_e}$. The rest can then be solved as in the previous subsection.\\
Again we remark that nothing changes if we have $\epsilon_c > 2/3$ and
$\epsilon_e > 2/3$ even though at that transition the states become separable and the channels entanglement-breaking, respectively.

\section{Discussion} \label{sec:discussion}
In this work we extended the results initially found by Ried et
al.~\cite{Ried2015}. We introduced an active and a passive quantum-observational scheme as analogies to the classical observational scheme. The passive quantum-observational scheme does not allow for an advantage over classical casual inference. In the active quantum observational scheme Alice and Bob can freely choose their measurement bases, which in principle allows for signaling. However, we investigated the quantum advantage over classical causal inference in a scenario where signaling is not possible in the active quantum observation scheme, as Alice' incoming state is completely mixed.\\

 We 
showed how the geometry of the set of signed singular values (SSV) of %the 
correlation matrices representing 
 positive maps of the density 
 operator $\rho_A\mapsto\rho_B$
determines the possibility to reconstruct the causal structure linking
$\rho_A$ and $\rho_B$.  We showed that there are more cases than
previously known for which a complete solution of the causal inference
problem can be found without additional constraints, namely all
correlations created by maps whose signed singular values of the 
correlation matrix 
lie on the
edges of the cube of positive maps $\mathcal{C}$ defined in
\eqref{def:C}. A necessary and sufficient  condition for this is
that the state is maximally entangled, that the channel is unitary,
and that the corresponding 
correlation matrices 
have a SSV decomposition 
involving the same rotations. \\ 
For correlations guaranteed to be
produced by a
mixture of a unital channel and a unital bipartite state,  we
quantified the quantum advantage by giving the intervals for possible
values of the causality parameter $p$. Here, in order to constrain $p$,
and hence have an advantage over classical causal inference, it is 
necessary that the correlations were caused by an entangled
state and/or an entanglement preserving channel. This is because
correlations caused by any mixture of a separable state and an
entanglement breaking  quantum channel always describe super-positive
maps. According to theorem \ref{theo:causal_interval} the causal
interval for any super-positive map 
$\mathcal{M}$ is $I_\mathcal{M} = [0,1]$.  Hence, super-positive maps do not allow any causal inference.  \\%otherwise we for sure observe a superpositive map and hence p \in [0,1]
%Entanglement and entanglement preserving channels are the source of the advantage. \\
Things change when we further strengthen the assumptions on the data generating
processes and  
allow only unitary freedom in the state, corresponding to a
generalized Werner state with given degree of noise $\epsilon_c$, or unitary freedom in the channel,
corresponding to a generalized depolarizing channel with given degree of noise $\epsilon_e$. 
We showed that in this scenario the causality parameter $p$ can always be
uniquely determined and in most cases the state and the channel can be
reconstructed up to binary ambiguity. For $\epsilon_c >
2/3$ the state becomes separable and for $\epsilon_e > 2/3$ the channel entanglement breaking
but still causal inference is feasible. Therefore entanglement and
entanglement preservation are not a necessary condition in this
scenario. The assumptions 
on the data generating processes, i.e. a-priori knowledge of
$\epsilon_c$ and $\epsilon_e$,  
are strong enough, such that even  correlations corresponding to super-positive maps reveal the
underlying causal structure.\\ 
%All the results were obtained within an active quantum
%observational scheme. This cannot be understood as a strict analogue of
%a classical observational scheme 
%as quantum coherences can be destroyed in this scheme, thus possibly modifying
%the global statistics, but none of  the results presented
%here arise from actively exploiting the possibility of
%signaling. This is because in none of the considered scenarios Alice has
%the possibility to change Bob's statistic by her choice of
%measurement. 

\newpage
%---------------------- APPENDIX ------------------------------------

\section{Appendix} \label{sec:Appendix}
\subsection*{Signed singular values of sums of matrices}
Let $A$ be a $n\times n$ real matrix. A possible \textit{singular value decomposition} (SVD) of $A$ is given as \begin{align}
A = O_1 D O_2, \label{def: App_ASV}
\end{align}
 where $O_{1,2}$ are orthogonal matrices ($O_i O_i^T = \mathbb{1}$) and $D$ is a positive semi-definite diagonal matrix $D = \diag(\sigma_1^A,...,\sigma_n^A)$, with $\sigma_i^A$ called the (absolute) \textit{singular values} (SV) of $A$. %It is easy to see that 
 The matrices in \eqref{def: App_ASV} are not uniquely defined and all
 possible permutations of the singular values on the diagonal of $D$
 are possible for different orthogonal matrices $O_1$ and $O_2$. We use this
 freedom to write the SV in \textit{canonical} order, 
$\sigma_1^A \geq \sigma_2^A \geq ...\geq \sigma_n^A$. \\
\textit{Example:} We give two different SVDs of a $3\times3$ matrix $B$ 
 \begin{align*}
 B \equiv \begin{pmatrix}
 -1 & 0&0\\ 0&0&-3\\0&2&0
 \end{pmatrix} &=\begin{pmatrix}
 1&0&0\\0&0&-1\\0&1&0
\end{pmatrix}  \begin{pmatrix}
 1&0&0\\0&2&0\\0&0&3
 \end{pmatrix} 
 \begin{pmatrix}
 -1&0&0\\0&1&0\\0&0&1
 \end{pmatrix} \\
 &= \begin{pmatrix}
 0&-1&0\\ 0&0&1\\-1&0&0
 \end{pmatrix} \begin{pmatrix}
 3&0&0\\0&2&0\\0&0&1
 \end{pmatrix}\begin{pmatrix}
 0&0&1\\0&1&0\\1&0&0
 \end{pmatrix}.
 \end{align*}
 The last decomposition gives the singular values of $B$ in canonical order $\sigma_1^B =3,\sigma_2^B = 2$ and $\sigma_3^B = 1$.\\
Next we call 
\begin{align}
A = R_1 D' R_2 \label{def:App_SSV}
\end{align}
 the \textit{signed singular value decomposition} (also called real
 singular values \cite{Amir-Moez1958}) of $A$, where $R_i \in SO(n)$ are
 orthogonal matrices with determinant equal to one. In the $3\times 3$ scenario these correspond to proper rotations
 in $\mathbb{R}^3$. The diagonal matrix $D'$
 contains the \textit{signed singular values} (SSV) of A. The SSV have
 the same absolute values as the SV but additionally can have  negative
 signs. Concretely, the freedom in choosing $R_1$ and $R_2$ allows
one
 to get any permutations of the SV on the diagonal of $D'$ together
 with an even or odd number of minus signs, depending on whether $A$
 has positive or negative determinant, respectively. If at least one
 singular value equals 0, the number of signs becomes completely
 arbitrary. Using the same matrix $B$ as above we 
 give two
 different signed singular value decompositions as an
 \textit{example:} 
 \begin{align}
  B \equiv \begin{pmatrix}
 -1 & 0&0\\ 0&0&-3\\0&2&0
 \end{pmatrix} &=\begin{pmatrix}
 1&0&0\\0&0&-1\\0&1&0
\end{pmatrix}  \begin{pmatrix}
 -1&0&0\\0&2&0\\0&0&3
 \end{pmatrix} 
 \begin{pmatrix}
 1&0&0\\0&1&0\\0&0&1
 \end{pmatrix} \\
 &= \begin{pmatrix}
 0&-1&0\\ 0&0&1\\-1&0&0
 \end{pmatrix} \begin{pmatrix}
 3&0&0\\0&2&0\\0&0&-1
 \end{pmatrix}\begin{pmatrix}
 0&0&1\\0&1&0\\-1&0&0
 \end{pmatrix}. \label{eq:canonicalSSV}
 \end{align}
 For the SSV decomposition we 
 define a \textit{canonical} order  
with the absolute values of the singular values sorted
in decreasing order and only a negative sign on the last entry if the
matrix has negative determinant, as in \eqref{eq:canonicalSSV}. 
The rotational freedom in
\eqref{def:App_SSV} allows for arbitrary permutations of the order of
singular values and addition of any even number of minus signs.\\ 
 Confusion may arise since for example an $\mathbb{R}^3$ permutation matrix corresponding to a permutation of exactly two coordinates has determinant -1, so why would it be allowed? The point is, that we not only want to permute elements of a vector, but the diagonal elements of a matrix. We illustrate that by permuting two components of \textit{i)} a vector and \textit{ii)} a diagonal matrix. 
 \begin{align}
 &P_{yz} \equiv \begin{pmatrix} 1&0&0\\0&0&1\\0&1&0  \end{pmatrix}, \qquad \det P_{yz} = -1,\\
 & \textit{i) }P_{yz}\cdot \begin{pmatrix}
 a\\b\\c  \end{pmatrix} = \begin{pmatrix}
 a\\c\\b
 \end{pmatrix},\\
 &\textit{ii) }P_{yz} \cdot\text{diag}(a,b,c) \cdot P_{yz} = \text{diag}(a,c,b) = (-P_{yz}) \, \text{diag}(a,b,c)\, (-P_{yz}).
 \end{align}
 I.e.~as $-P_{yz} = R_{\hat{\boldsymbol{x}}}(\pi/2)\cdot
 R_{\hat{\boldsymbol{y}}}(\pi) $ the effect of permuting the second
 and third diagonal entry of a diagonal matrix can also be obtained by
  proper rotations, and correspondingly for other permutations of the
 SSV. 
Hence all permutations of the SSV are allowed.\\ 

Fan \cite{Fan1951} gave bounds 
on the SV of $A+B$ given the SV of two real matrices $A$ and $B$,
derived from the corresponding results for eigenvalues of hermitian
matrices and using that the matrix $\tilde{A} \equiv \begin{pmatrix} 
0_{n\times n}& A\\A^T & 0_{n\times n}
\end{pmatrix} $ has the singular values of $A$ and their negatives as eigenvalues \cite[p.243 for review]{Marshall1979}. 
In the main part of this work we need a more constraining statement using the SSV, and thus taking the determinant of $A$, $B$, and $A+B$ into account as well. This leads to  theorem \ref{theo:SSV}. In the following we will denote with $\tilde{\boldsymbol{\sigma}}(A)$ the vector of canonical 
 SSV of the $n\times n$ real matrix A. Since the product of two rotations is again a rotation it follows directly from \eqref{def:App_SSV} that 
\begin{align}
\tilde{\boldsymbol{\sigma}}(Q_1A Q_2) = \tilde{\boldsymbol{\sigma}}(A), \; \forall Q_1,Q_2 \in SO(n). \label{eq:invariance}
\end{align}
Let $\boldsymbol{w}$ be a $n$-dimensional vector. We define 
\begin{align}
\Delta_{\boldsymbol{w}}\equiv \text{Conv}\left( \left\{\left.\left( s_1 w_{\pi(1)}, ..., s_n w_{\pi(n)}\right)^T\right|s_\nu \in \{-1,1\}: \prod_\nu s_\nu =1, \pi\in S_n\right\}\right) \label{def:Delta}
\end{align}
as the convex hull of all possible permutations $\pi\in S_n$ of the
components of $\boldsymbol{w}$ 
multiplied with an even number of minus signs. 
Let now $\boldsymbol{w_1}$ and $\boldsymbol{w_2}$ be two $n$-dimensional vectors. We define 
\begin{align}
\Sigma_{\boldsymbol{w_1}, \boldsymbol{w_2}} \equiv \left\{ a+b| a\in \Delta_{\boldsymbol{w_1}}, b\in \Delta_{\boldsymbol{w_2}}\right\}. \label{def:Sigma}
\end{align}
Figure \ref{Fig:sets} presents an illustration of the case
$n=2$.

\begin{figure}
\centering
\includegraphics[scale=1.5]{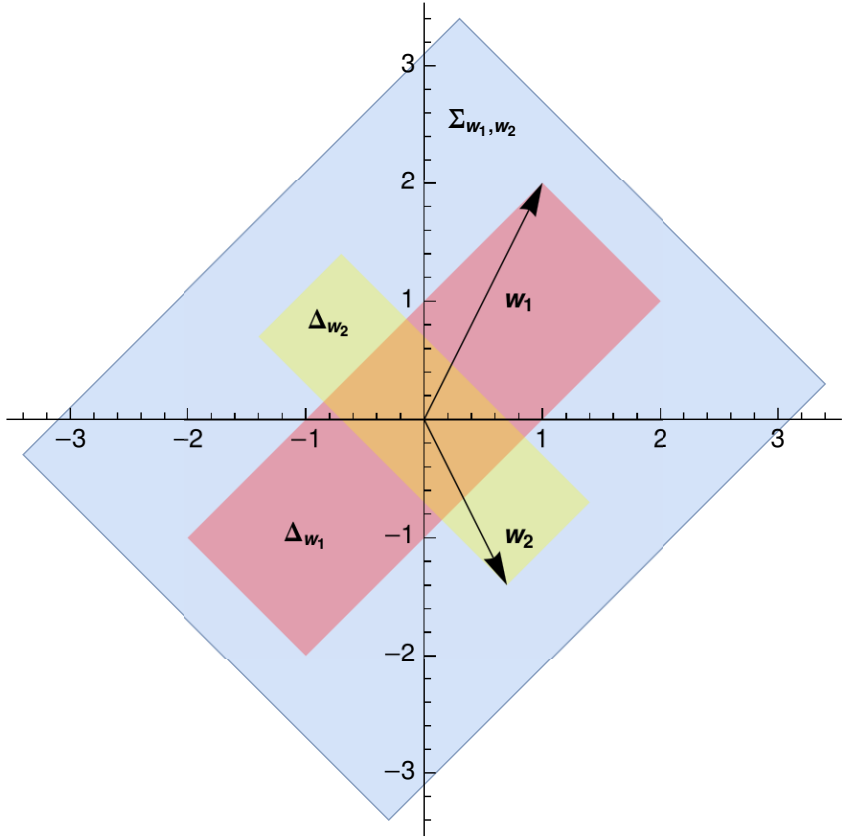}
\caption{\textbf{Illustration of theorem \ref{theo:SSV}:} Suppose we have two $2\times 2$ matrices $A$ and $B$ with SSV $\boldsymbol{w}_1$ and $\boldsymbol{w}_2$ respectively. The red and the yellow sets correspond to $\Sigma_{\boldsymbol{w}_1}$ and $\Sigma_{\boldsymbol{w}_2}$ defined by \eqref{def:Delta}. By theorem \ref{theo:SSV} the vector of SSV of $A+B$ then lies within the blue set, defined by \eqref{def:Sigma}.}
\label{Fig:sets}
\end{figure}
\begin{theo} \label{theo:SSV} Let $A$ and $B$ be two 
$n\times n$ real matrices whose SSV are known. Then
\begin{align}
\tilde{\boldsymbol{\sigma}}(A+B) \in \Sigma_{\tilde{\boldsymbol{\sigma}}(A),\tilde{\boldsymbol{\sigma}}(B)}.
\end{align}
\end{theo}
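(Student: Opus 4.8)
The plan is to prove the membership $\tilde{\boldsymbol{\sigma}}(A+B)\in \Sigma_{\tilde{\boldsymbol{\sigma}}(A),\tilde{\boldsymbol{\sigma}}(B)}$ by a support-function argument. Each $\Delta_{\boldsymbol{w}}$ defined in \eqref{def:Delta} is the convex hull of a finite set, hence a compact convex polytope, and by \eqref{def:Sigma} the set $\Sigma_{\boldsymbol{w}_1,\boldsymbol{w}_2}$ is their Minkowski sum, again compact and convex. For a closed convex set $K$ one has $x\in K$ if and only if $\boldsymbol{u}\cdot x\le h_K(\boldsymbol{u})$ for every $\boldsymbol{u}$, where $h_K(\boldsymbol{u})=\max_{y\in K}\boldsymbol{u}\cdot y$ is the support function, and the support function of a Minkowski sum is additive,
\begin{align*}
h_{\Sigma_{\boldsymbol{w}_1,\boldsymbol{w}_2}}(\boldsymbol{u})=h_{\Delta_{\boldsymbol{w}_1}}(\boldsymbol{u})+h_{\Delta_{\boldsymbol{w}_2}}(\boldsymbol{u}).
\end{align*}
It therefore suffices to prove, for every $\boldsymbol{u}\in\mathbb{R}^n$, that $\boldsymbol{u}\cdot\tilde{\boldsymbol{\sigma}}(A+B)\le h_{\Delta_{\tilde{\boldsymbol{\sigma}}(A)}}(\boldsymbol{u})+h_{\Delta_{\tilde{\boldsymbol{\sigma}}(B)}}(\boldsymbol{u})$.

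Next I would rewrite the left-hand side as a trace and split it. With $D_{\boldsymbol{u}}=\diag(\boldsymbol{u})$ and the canonical SSV decomposition $A+B=R_1\,\diag(\tilde{\boldsymbol{\sigma}}(A+B))\,R_2$, $R_1,R_2\in SO(n)$, we have $\boldsymbol{u}\cdot\tilde{\boldsymbol{\sigma}}(A+B)=\tr{D_{\boldsymbol{u}}R_1^{-1}(A+B)R_2^{-1}}$, which by linearity equals $\tr{D_{\boldsymbol{u}}R_1^{-1}AR_2^{-1}}+\tr{D_{\boldsymbol{u}}R_1^{-1}BR_2^{-1}}$. Substituting $A=R_1^A\,\diag(\tilde{\boldsymbol{\sigma}}(A))\,R_2^A$ and using that products of rotations are rotations (cf.\ \eqref{eq:invariance}), the $A$-summand becomes $\tr{D_{\boldsymbol{u}}T_1\,\diag(\tilde{\boldsymbol{\sigma}}(A))\,T_2}$ for particular $T_1,T_2\in SO(n)$, and likewise for $B$. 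Hence each summand is bounded by the continuous maximum (attained by compactness)
\begin{align}
\phi(\boldsymbol{u},\boldsymbol{w}):=\max_{T_1,T_2\in SO(n)}\tr{D_{\boldsymbol{u}}\,T_1\,\diag(\boldsymbol{w})\,T_2}, \label{plan:phi}
\end{align}
and the whole theorem reduces to the identity
\begin{align}
\phi(\boldsymbol{u},\boldsymbol{w})=h_{\Delta_{\boldsymbol{w}}}(\boldsymbol{u})=\max_{\substack{\pi\in S_n,\ s_\nu\in\{-1,1\}\\ \prod_\nu s_\nu=1}}\ \sum_{\nu}u_\nu s_\nu w_{\pi(\nu)}. \label{plan:keylemma}
\end{align}

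The bound $\phi\ge h_{\Delta_{\boldsymbol{w}}}$ is the easy direction: every generator of $\Delta_{\boldsymbol{w}}$ in \eqref{def:Delta} is an even-signed permutation of $\boldsymbol{w}$, and I would realize each such vector as a diagonal $T_1\,\diag(\boldsymbol{w})\,T_2$ with $T_1,T_2\in SO(n)$. Permutations of the diagonal entries and pairs of sign flips are both implemented by proper rotations, exactly the elementary constructions spelled out around \eqref{eq:canonicalSSV} (e.g.\ $-P_{yz}\in SO(3)$ realizes a transposition), and the bookkeeping $\det T_1\det\diag(\boldsymbol{w})\det T_2=\prod_\nu w_\nu$ forces an even number of sign changes, matching the constraint $\prod_\nu s_\nu=1$. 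Evaluating \eqref{plan:phi} at these configurations reproduces each term $\sum_\nu u_\nu s_\nu w_{\pi(\nu)}$, so $\phi$ is at least their maximum.

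The reverse bound $\phi\le h_{\Delta_{\boldsymbol{w}}}$ is the main obstacle, and it is precisely here that the determinant enters. As $T_1,T_2$ range over $SO(n)$, the matrix $X:=T_1\,\diag(\boldsymbol{w})\,T_2$ ranges over all real matrices with singular values $|w_\nu|$ and \emph{fixed} determinant $\det X=\prod_\nu w_\nu$ (one direction is immediate; for the converse one aligns singular-value decompositions and removes a residual $\det=-1$ by inserting a common coordinate reflection into both factors). Maximizing the linear functional $X\mapsto\tr{D_{\boldsymbol{u}}X}$ over this compact set, the optimum is attained at an extreme configuration, which I expect to be a signed permutation of $\diag(\boldsymbol{w})$. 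Over the full group $O(n)$ this would be the unconstrained von Neumann trace bound $\sum_\nu|u|_{[\nu]}|w|_{[\nu]}$, reached by aligning $X$ with $D_{\boldsymbol{u}}$ through an arbitrary signed permutation; restricting to $SO(n)$ discards exactly the odd-flip alignments and leaves the even-signed-permutation maximum on the right of \eqref{plan:keylemma}. The technical heart is thus a determinant-constrained sharpening of von Neumann's inequality; I would establish it either by a perturbation argument at the optimizer (rotating in a two-dimensional coordinate plane to improve any non-permutation candidate) or by applying the orthogonal-Procrustes identity $\max_{T\in SO(n)}\tr{CT}=\sum_\nu\sigma_\nu(C)-2\,\sigma_{\min}(C)\,[\det C<0]$ to one factor at a time while tracking that $\det\big(\diag(\boldsymbol{w})\,T_2\,D_{\boldsymbol{u}}\big)$ keeps a fixed sign. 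Combining both directions gives \eqref{plan:keylemma}, and with it the support-function inequality and hence the theorem.
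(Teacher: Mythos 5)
Your reduction is sound: $\Sigma_{\tilde{\boldsymbol{\sigma}}(A),\tilde{\boldsymbol{\sigma}}(B)}$ is a compact convex Minkowski sum, support functions add, and writing $\boldsymbol{u}\cdot\tilde{\boldsymbol{\sigma}}(A+B)$ as a trace and splitting it linearly correctly reduces the theorem to your key identity $\max_{T_1,T_2\in SO(n)}\text{tr}\left[D_{\boldsymbol{u}}T_1\diag(\boldsymbol{w})T_2\right]=h_{\Delta_{\boldsymbol{w}}}(\boldsymbol{u})$. The easy direction ($\geq$) is essentially complete. But the hard direction ($\leq$) is where the entire content of the theorem lives, and you have not proved it --- you have only named two candidate strategies. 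The Procrustes route as described does not close: after maximizing over $T_1$ you are left with $\max_{T_2}\bigl(\sum_\nu\sigma_\nu(D_{\boldsymbol{w}}T_2D_{\boldsymbol{u}})-2\sigma_{\min}[\det<0]\bigr)$, and controlling the singular values of the product $D_{\boldsymbol{w}}T_2D_{\boldsymbol{u}}$ as $T_2$ varies is itself a nontrivial problem (the correction term prevents you from just invoking the unconstrained von Neumann bound). The perturbation-at-the-optimizer route can work but needs to be carried out; the set $\{T_1\diag(\boldsymbol{w})T_2\}$ is not convex, so ``the optimum is attained at an extreme configuration'' requires identifying the extreme points of its convex hull, which is precisely the hard statement. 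So as written there is a genuine gap at the technical heart.

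It is worth knowing that your key identity is exactly the support-function form of the results the paper leans on: Thompson's theorems on the diagonals of $R_1AR_2$ for $R_1,R_2\in SO(n)$, quoted as \eqref{eq:diag1} and \eqref{eq:diag2}, say that $\{\boldsymbol{d}(T_1\diag(\boldsymbol{w})T_2)\}$ has convex hull exactly $\Delta_{\boldsymbol{w}}$, and since $\text{tr}\left[D_{\boldsymbol{u}}X\right]=\boldsymbol{u}\cdot\boldsymbol{d}(X)$ this is equivalent to your lemma. The paper's own proof avoids duality altogether: it picks $R_1,R_2$ that diagonalize $A+B$ onto its canonical SSV, writes $\tilde{\boldsymbol{\sigma}}(A+B)=\boldsymbol{d}(R_1AR_2)+\boldsymbol{d}(R_1BR_2)$, and applies \eqref{eq:diag1} to each summand --- a three-line argument once Thompson is cited. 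The cleanest repair of your proof is therefore to cite Thompson for the hard direction of your key identity (at which point your argument becomes a dual rephrasing of the paper's); proving that identity from scratch by perturbation would be a self-contained alternative, but it is real work, not a detail.
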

\begin{proof} Let $A$ be a 
$n\times n$ real matrix and let $\boldsymbol{d}(A)$ denote the vector of diagonal entries of $A$. Thompson showed  the following two statements about the diagonal elements of $A$ \cite[theorems 7 and 8]{Thompson1977} 
\begin{align}
 i)& \boldsymbol{d}(A) \in \Delta_{\tilde{\boldsymbol{\sigma}}(A)},\label{eq:diag1}\\
ii)& \forall \boldsymbol{d} \in \Delta_{\tilde{\boldsymbol{\sigma}}(A)}\; \exists R_1,R_2 \in SO(n): \boldsymbol{d}= \boldsymbol{d}(R_1AR_2). \label{eq:diag2}
\end{align} 
Now let $A$ and $B$ be two $n\times n$ real matrices.  Let $R_1,R_2 \in SO(n)$ such that $\boldsymbol{d}(R_1(A+B)R_2) = \tilde{\boldsymbol{\sigma}}(A+B)$. We then have
\begin{align*}
 \tilde{\boldsymbol{\sigma}}(A+B) = \boldsymbol{d}(R_1(A+B)R_2) =  \boldsymbol{d}(R_1AR_2) + \boldsymbol{d}(R_1BR_2)  \in   \Sigma_{\tilde{\boldsymbol{\sigma}}(R_1AR_2) ,\tilde{\boldsymbol{\sigma}}(R_1BR_2) } = \Sigma_{\tilde{\boldsymbol{\sigma}}(A) ,\tilde{\boldsymbol{\sigma}}(B) },
\end{align*}
where the second equation follows from the linearity of matrix addition in every element and the last equality from \eqref{eq:invariance}.
\end{proof}\\
As mentioned above, results for the absolute singular values of $A+B$
have been known before. To complete, we show that the above proof works analogously for the corresponding statement on absolute singular values: Let
$\boldsymbol{\sigma}(A)$ denote the vector of canonical absolute singular values
of an $n\times n$ real matrix $A$, $\sigma_1(A) \geq \sigma_2(A)
\geq ... \geq \sigma_n(A)$. Let $B$ be another $n\times n$ real
matrix. Then \cite[chapter 9 G.1.d.]{Marshall1979} 
\begin{align}
\boldsymbol{\sigma}(A+B) \prec_w \boldsymbol{\sigma}(A) + \boldsymbol{\sigma}(B), \label{eq:majorizaiton}
\end{align}
i.e.~the vector of canonical singular values of $A+B$  is \textit{weakly majorized} by the sum of the
 vectors of canonical singular values of $A$ and $B$. \textit{Weak majorization} for two vectors $\boldsymbol{x}$ and $\boldsymbol{y}$ with $x_1\geq x_2 \geq ...\geq x_n$ and $y_1 \geq y_2\geq...\geq y_n$ is defined as
\begin{align}
\boldsymbol{x} \prec_w \boldsymbol{y} \Leftrightarrow \sum_{i=1}^k x_i \leq \sum_{i=1}^k y_k\; \forall k \in \left\{1,2,...,n\right\}.
\end{align}
To see 
\eqref{eq:majorizaiton} define $\Delta'_{\boldsymbol{w}}$ analogously
to \eqref{def:Delta} but without the constraint $ \prod_\nu s_\nu =1$, i.e.~allowing arbitrary sign flips.  The
analogue statements of \eqref{eq:diag1} and \eqref{eq:diag2} hold if we
exchange the SSV with the absolute singular values, proper rotations (elements of $SO(n)$) with
orthogonal matrices (elements of $O(n)$),  and $\Delta_{\boldsymbol{w}}$ with
$\Delta'_{\boldsymbol{w}}$. We then find, that
$\boldsymbol{\sigma}(A+B) \in
\Sigma'_{\boldsymbol{\sigma}(A),\boldsymbol{\sigma}(B)}$, with $\Sigma'_{\boldsymbol{w_1}, \boldsymbol{w_2}} \equiv \left\{ a+b| a\in \Delta'_{\boldsymbol{w_1}}, b\in \Delta'_{\boldsymbol{w_2}}\right\}$. 
Since per definition the absolute singular values are non-negative, we
can further restrict $\Sigma'$ to the first hyperoctant. On the other
hand, for two  vectors $\boldsymbol{x}$, $\boldsymbol{y} \in
\mathbb{R}^n_+$ we have (proposition C.2. of chapter 4 in 
\cite{Marshall1979})
\begin{align}
\boldsymbol{x} \prec_w \boldsymbol{y} \Leftrightarrow \boldsymbol{x} \in \text{Conv}\left(\left\{s_1 y_{\pi(1)},..., s_n y_{\pi(n)}| s_\nu \in \{0,1\}, \pi\in S_n\right\}\right).
\end{align} 
The set on the r.h.s.~coincides with the restriction of $\Sigma'$ to the first hyperoctant if we take $\boldsymbol{y} = \boldsymbol{\sigma}(A) + \boldsymbol{\sigma}(B)$. Taking $\boldsymbol{x} = \boldsymbol{\sigma}(A+B)$, eq.~\eqref{eq:majorizaiton} follows.

\bibliography{library}
%\bibliographystyle{unsrt}
% ==========================================================================
\end{document}